\documentclass[journal,twocolumn,10pt]{IEEEtran}
\usepackage{amsmath,epsfig}
\usepackage{amssymb}
\usepackage{booktabs}
\usepackage{amsthm}
\usepackage{graphicx}
\usepackage{caption}
\usepackage{subcaption}
\usepackage{rotating}
\usepackage{floatflt} 
\usepackage{paralist} 
\usepackage{algorithm} 
\usepackage{xcolor}
\usepackage{color}
\usepackage{epstopdf}
\usepackage[normalem]{ulem}
\usepackage{float}
\usepackage{todonotes}

\theoremstyle{definition}
\newtheorem{defn}{Definition}
\newcommand{\mypar}[1]{{\bf #1.}}
\newtheorem{myLem}{Lemma}

\newtheorem{myThm}{Theorem}
\newtheorem{myCorollary}{Corollary}

\newcommand{\R}{\ensuremath{\mathbb{R}}}
\newcommand{\C}{\ensuremath{\mathbb{C}}}
\DeclareMathOperator{\Id}{I}

\def\x{x}

\def\f{f}
\def\e{e}

\def\vv{\mathbf{v}}

\def\V{\mathcal{V}}
\def\M{\mathcal{M}}

\DeclareMathOperator{\DFT}{F}
\DeclareMathOperator{\BL}{BL}

\DeclareMathOperator{\Adj}{A}
\DeclareMathOperator{\Cm}{C}
\DeclareMathOperator{\D}{D}

\DeclareMathOperator{\Eig}{\Lambda}
\DeclareMathOperator{\Pj}{P}

\DeclareMathOperator{\Vm}{V}
\DeclareMathOperator{\X}{X}
\DeclareMathOperator{\Um}{U}

\newcommand{\DSPG}{$\mbox{DSP}_{\mbox{\scriptsize G}}$}

\newcommand{\highlightChange}{\color{red}}
\def\HC{\highlightChange}


\addtolength{\topmargin}{-2mm}
\addtolength{\textheight}{2mm}
\addtolength{\oddsidemargin}{-1mm}
\addtolength{\evensidemargin}{-1mm}
\addtolength{\textwidth}{2mm}

\begin{document}
\title{ Discrete Signal Processing on Graphs: \\Sampling Theory}
\author{Siheng~Chen, Rohan~Varma,
Aliaksei~Sandryhaila,
  Jelena~Kova\v{c}evi\'c
  \thanks{S. Chen and R. Varma are with the Department of Electrical and Computer
    Engineering, Carnegie Mellon University, Pittsburgh, PA, 15213
    USA. Email: sihengc@andrew.cmu.edu, rohanv@andrew.cmu.edu. A. Sandryhaila is with HP Vertica, Pittsburgh, PA, 15203
    USA. Email: aliaksei.sandryhaila@hp.com. J. Kova\v{c}evi\'c is with the
      Departments of Electrical and
      Computer Engineering and Biomedical Engineering, Carnegie Mellon University, Pittsburgh,
      PA. Email: jelenak@cmu.edu.  }
}
\markboth{IEEE Trans. Signal Process. To appear.} {Chen \MakeLowercase{\textit{et al.}}: Sampling}
 \maketitle


\begin{abstract}
We propose a sampling theory for signals that are supported on either directed or undirected graphs. The theory follows the same paradigm as classical sampling theory. We show that  perfect recovery is possible for graph signals bandlimited under the graph Fourier transform. The sampled signal coefficients form a new graph signal, whose corresponding graph structure preserves the first-order difference of the original graph signal. For general graphs, an optimal sampling operator based on experimentally designed sampling is proposed to guarantee perfect recovery and robustness to noise; for graphs whose graph Fourier transforms are frames with maximal robustness to erasures as well as for Erd\H{o}s-R\'enyi graphs, random sampling leads to perfect recovery with high probability. We further establish the connection to the sampling theory of finite discrete-time signal processing and previous work on signal recovery on graphs. To handle full-band graph signals, we propose a graph filter bank based on sampling theory on graphs.  Finally, we apply the proposed sampling theory to semi-supervised classification on online blogs and digit images, where we achieve similar or better performance with fewer labeled samples compared to previous work.

\end{abstract}
\begin{keywords}
Discrete signal processing on graphs, sampling theory,  experimentally designed sampling, compressed sensing
\end{keywords}

\section{Introduction}
\label{sec:intro}
With the explosive growth of information and communication, signals are generated at an unprecedented rate from various sources, including social, citation, biological, and physical infrastructure~\cite{Jackson:08,Newman:10}, among others. Unlike time-series signals or images, these signals possess complex, irregular structure, which requires novel processing techniques leading to the emerging field of signal processing on graphs~\cite{ShumanNFOV:13,SandryhailaM:14}.

Signal processing on graphs extends classical
discrete signal processing to
signals with an underlying complex, irregular
structure. The framework models that underlying structure by a graph and signals by graph signals, generalizing concepts and tools from classical discrete signal
processing to graph signal processing.  Recent work includes graph-based
filtering~\cite{SandryhailaM:13,NarangO:12,NarangO:13}, graph-based
transforms~\cite{SandryhailaM:13,HammondVG:11,NarangSO:10}, sampling and interpolation
on graphs~\cite{Pesenson:08,NarangGO:13, AnisGO:14}, uncertainty principle on graphs~\cite{AgaskarL:13}, semi-supervised
classification on graphs~\cite{ChenCRBGK:13,ChenSMK:13,EkambaramFAB:13}, graph
dictionary learning~\cite{DongTFV:14, ThanouSF:14}, denoising~\cite{NarangO:12, ChenSMK:14a}, community detection and clustering on graphs~\cite{Tremblay:14, DongFVN:14, ChenO:14}, graph signal recovery~\cite{ChenSMK:14, ChenSLWMRBGK:14, WangLG:14} and distributed algorithms~\cite{WangLG:15a, ChenSK:15b}.

Two basic approaches to signal processing on graphs have been
considered: The first is rooted
in the~\emph{spectral graph theory} and builds upon the~\emph{graph Laplacian matrix}~\cite{ShumanNFOV:13}. Since the standard graph Laplacian matrix is restricted to be symmetric and positive semi-definite, this approach is applicable only to undirected graphs with real and nonnegative edge
weights. The second approach,~\emph{discrete signal processing on graphs}
(\DSPG)~\cite{SandryhailaM:13,SandryhailaM:131}, is rooted in the
\emph{algebraic signal processing theory}~\cite{PueschelM:08,Pueschelm:08b} and builds upon the graph
shift operator, which works as the elementary
operator that generates all linear shift-invariant filters for signals
with a given structure. The graph shift operator is the adjacency matrix and represents the relational dependencies
between each pair of nodes. Since the graph shift is not restricted to be
symmetric, the corresponding framework is applicable to arbitrary
graphs, those with undirected or directed edges, with real or complex,
nonnegative or negative weights. Both frameworks analyze
signals with complex, irregular structure, generalizing a series of
concepts and tools from classical signal processing, such as graph
filters, graph Fourier transform, to diverse graph-based applications.

In this paper, we consider the classical signal processing task of sampling and interpolation within the framework of \DSPG~\cite{VetterliKG:12,KovacevicP:08}. As the bridge connecting sequences and functions,  classical sampling theory shows that a bandlimited function can be perfectly recovered from its sampled sequence if the sampling rate is high enough~\cite{Unser:00}. More generally, we can treat any decrease in dimension via a linear operator as sampling, and, conversely, any increase in dimension via a linear operator as interpolation~\cite{VetterliKG:12,ChenSK:15a}. Formulating a
sampling theory in this context is equivalent to moving between  higher- and lower-dimensional spaces. 

A sampling theory for graphs has interesting applications. For example, given a graph representing friendship connectivity on Facebook, we can  sample a  fraction of users and query their hobbies and then recover all users' hobbies. The task of sampling on graphs is, however, not well understood~\cite{NarangGO:13,AnisGO:14}, because graph signals lie on complex, irregular structures. It is even more challenging to find a graph structure that is associated with the sampled signal coefficients; in the Facebook example, we sample a small fraction of users and an associated graph structure would allow us to infer new connectivity between those sampled users, even when they are not directly connected in the original graph.

Previous works on sampling theory~\cite{Pesenson:08,AnisGO:14,GaddeAO:14} consider graph signals that are uniquely sampled onto a given subset of nodes. This approach is hard to apply to directed graphs. It also does not explain which graph structure supports these sampled coefficients.

In this paper, we propose a sampling theory for signals that are supported on either directed or undirected graphs. Perfect recovery is possible for graph signals bandlimited under the graph Fourier transform. We also show that the sampled signal coefficients form a new graph signal whose corresponding graph structure is constructed from the original graph structure. The proposed sampling theory follows Chapter~5 from~\cite{VetterliKG:12} and is consistent with classical sampling theory. 

We call a sampling operator that leads to perfect recovery~\emph{a qualified sampling operator}. We show that for general graphs, an optimal sampling operator based on experimentally designed sampling is proposed to guarantee perfect recovery and robustness to noise; for graphs whose graph Fourier transforms are frames with maximal robustness to erasures as well as for Erd\H{o}s-R\'enyi graphs, random sampling leads to perfect recovery with high probability. We further establish the connection to sampling theory of finite discrete-time signal processing and previous works on sampling theory on graphs. To handle full-band graph signals, we propose graph filter banks to force graph signals to be bandlimited. Finally, we apply the proposed sampling theory to semi-supervised classification of online blogs and digit images, where we achieve similar or better performance with fewer labeled samples compared to the previous works.

\mypar{Contributions}  The main contributions of the
  paper are as follows:
\begin{itemize}
\item A novel  framework for sampling a graph signal, which solves complex sampling problems by using simple tools from linear algebra;
\item A novel approach for sampling a graph by preserving the first-order difference of the original graph signal;
\item A novel approach for designing a sampling operator on graphs.
\end{itemize}

\mypar{Outline of the paper} Section~\ref{sec:DSPG} formulates the problem and briefly reviews \DSPG, which lays the foundation for this paper; Section~\ref{sec:STG} describes the proposed sampling theory for graph signals, and the proposed construction of graph structures for the sampled signal coefficients; Section~\ref{sec:qualsSO} studies the qualified sampling operator, including random sampling and experimentally designed sampling; Section~\ref{sec:discuss} discusses the relations to previous works and extends the sampling framework to the design of graph filter banks; Section~\ref{sec:apps} shows the application to semi-supervised learning; Section~\ref{sec:conclusions} concludes the paper and provides pointers to future directions.

\section{Discrete Signal Processing on Graphs}
\label{sec:DSPG}
In this section, we briefly review relevant concepts of discrete
signal processing on graphs; a thorough introduction can be found
in~\cite{SandryhailaM:14,SandryhailaM:131}.  It is a theoretical
framework that generalizes classical discrete signal processing from
regular domains, such as lines and rectangular lattices, to irregular
structures that are commonly described by graphs.

\subsection{Graph Shift} 
Discrete signal processing on graphs studies signals with complex,
irregular structure represented by a graph $G = (\V,\Adj)$, where $\V
= \{v_0,\ldots, v_{N-1}\}$ is the set of nodes and $\Adj \in \C^{N
  \times N}$ is the~\emph{graph shift}, or a weighted adjacency matrix. It represents the connections of the graph $G$, which can be
either directed or undirected (note that the standard graph Laplacian matrix
can only represent undirected graphs~\cite{ShumanNFOV:13}). The edge weight $\Adj_{n,m}$ between
nodes $v_n$ and $v_m$ is a quantitative expression of the underlying
relation between the $n$th and the $m$th node, such as similarity, dependency, or a communication pattern. To guarantee that the shift operator is properly scaled, we normalize the graph shift A to satisfy $|\lambda_{\max} (\Adj)|= 1$.

\subsection{Graph Signal}
Given the graph representation $G = (\V,\Adj)$, a \emph{graph signal}
is defined as the map on the graph nodes that assigns the signal
coefficient $x_n\in\C$ to the node $v_n$.  Once the node order is
fixed, the graph signal can be written as a vector
\begin{equation}
\label{eq:graph_signal}
  \x \ = \ \begin{bmatrix}
 x_0 ~ x_1~  \ldots ~x_{N-1}
\end{bmatrix}^T \in \C^N,
\end{equation}
where the $n$th signal coefficient
corresponds to  node $v_n$.

\subsection{Graph Fourier Transform} 
In general, a Fourier transform corresponds to the expansion of a
signal using basis elements that are invariant to filtering; here, this
basis is the eigenbasis of the graph shift $\Adj$ (or, if the complete
eigenbasis does not exist, the Jordan eigenbasis of $\Adj$). For simplicity, assume that $\Adj$ has a complete eigenbasis and the
spectral decomposition of $\Adj$
is~\cite{VetterliKG:12}
\begin{equation}
  \label{eq:eigendecomposition}
  \Adj=\Vm\Eig\Vm^{-1},
\end{equation}
where the eigenvectors of $\Adj$ form the columns of matrix $\Vm$, and
$\Eig\in\C^{N\times N}$ is the diagonal matrix of corresponding
eigenvalues $\lambda_0, \, \ldots, \, \lambda_{N-1}$ of $\Adj$. These eigenvalues represent frequencies on the graph~\cite{SandryhailaM:131}. We do not specify the ordering of graph frequencies here and we will explain why later.
\begin{defn}
\label{df:graph_FT}
The~\emph{graph Fourier transform} of $\x \in \C^N$ is
\begin{equation}
  \label{eq:graph_FT}
  \widehat{\x} = \Vm^{-1} \x.
\end{equation}
 The~\emph{inverse graph Fourier transform} is
 \begin{equation}
 \nonumber
 \x  =  \Vm  \widehat{\x} .
 \end{equation}
\end{defn}
\noindent
The vector $\widehat{\x}$ in~\eqref{eq:graph_FT} represents the
signal's expansion in the eigenvector basis and describes the frequency
content of the graph signal $\x$. The inverse graph Fourier transform
reconstructs the graph signal from its frequency content by combining
graph frequency components weighted by the coefficients of the
signal's graph Fourier transform.

\begin{table}[h]
  \footnotesize
  \begin{center}
    \begin{tabular}{@{}lll@{}}
      \toprule
      {\bf Symbol}  & {\bf Description} & {\bf Dimension}\\
      \midrule \addlinespace[1mm]
      $ \Adj $ &  graph shift &  $N \times N$\\ 
      $ \x $ &  graph signal &  $N$\\
      $ \Vm^{-1}$  & graph Fourier transform matrix &  $N \times N$\\             	  $ \widehat{\x}$ &  graph signal in the frequency domain &  $N$\\
      $\Psi $ & sampling operator &  $M \times N$\\ 
      $\Phi $ & interpolation operator &  $N \times M$\\ 
	  $ \M $ &  sampled indices &  \\
	  $\x_\M $ & sampled signal coeffcients of $\x$&  $M$\\ 
	  $\widehat{\x}_{(K)} $ & first $K$ coeffcients of $\widehat{\x}$&  $K$\\ 
	  $\Vm_{(K)} $ & first $K$ columns of $\Vm$&  $N \times K$\\ 
      \bottomrule
    \end{tabular}
  \end{center}
  \caption{\label{table:parameters}
    Key notation used in the paper{\HC .}
 }
\end{table}

\section{Sampling on Graphs}
\label{sec:STG}
Previous works on sampling theory of graph signals is based on spectral graph theory~\cite{AnisGO:14}. The bandwidth of graph signals is defined based on the value of graph frequencies, which correspond to the eigenvalues of the graph Laplacian matrix. Since each graph has its own graph frequencies, it is hard in practice to specify a general cut-off graph frequency; it is also computationally inefficient to compute all the values of graph frequencies, especially for large graphs.

In this section, we propose a novel sampling framework for graph signals. Here, the bandwidth definition is  based on the number of non-zero signal coefficients in the graph Fourier domain. Since each signal coefficient in the graph Fourier domain corresponds to a graph frequency, the bandwidth definition is also based on the number of graph frequencies. This makes the proposed sampling framework strongly connected to linear algebra, that is, we are allowed to use simple tools from linear algebra to perform sampling on complex, irregular graphs.

\subsection{Sampling \& Interpolation}

Suppose that we want to sample $M$ coefficients of a graph signal $\x \in
\C^N$ to produce a sampled signal $\x_\M \in \C^M$ ($M < N$), where $\M = (\M_0, \cdots, \M_{M-1})$ denotes the sequence of \emph{sampled} indices, and $\M_i \in \{0, 1, \cdots, N-1 \}$. We then interpolate $\x_{\M}$ to get
$\x' \in \C^N$, which recovers $\x$ either exactly or
approximately. The sampling operator $\Psi$ is a linear mapping from
$\C^N$ to $\C^M$, defined as
\begin{equation}
\label{eq:Psi}
 \Psi_{i,j} = 
  \left\{ 
    \begin{array}{rl}
      1, & j = \M_i;\\
      0, & \mbox{otherwise},
  \end{array} \right. 
\end{equation}
and the interpolation operator $\Phi$ is a linear
mapping from $\C^M$ to $\C^N$ (see Figure~\ref{fig:sampling}),
\begin{figure}[t]
  \begin{center}
     \includegraphics[width= 0.95\columnwidth]{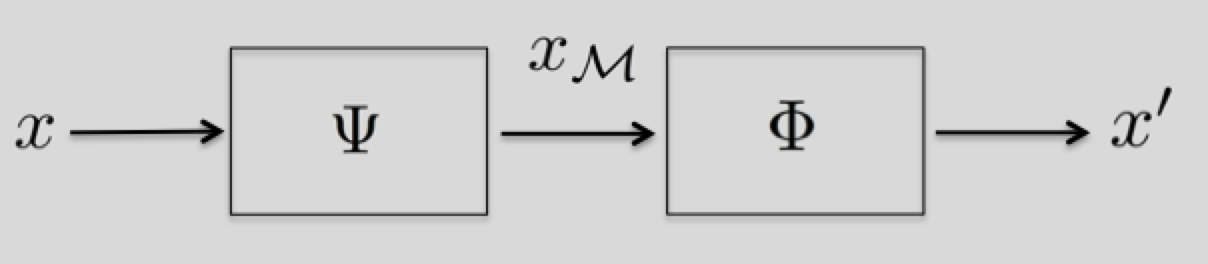}
  \end{center}
  \caption{\label{fig:sampling}  Sampling followed by interpolation.}
  \vspace{-0.15in}
\end{figure}

\begin{eqnarray}
{\rm sampling:}~~&&\x_{\M} =  \Psi \x \in \C^{M},
\nonumber
\\ \nonumber
{\rm interpolation:}~~&&\x' =  {\Phi} \x_{\M} = \Phi \Psi \x  \in \C^{N},
\end{eqnarray}
where $\x' \in \R^N$ recovers $\x$ either exactly or approximately.  We consider two sampling strategies:~\emph{random sampling} means that sample indices are chosen from $\{0, 1, \cdots, N-1\}$ independently and randomly; and~\emph{experimentally design sampling} means that sample indices can be chosen beforehand. It is clear that random sampling is a subset of experimentally design~sampling.

Perfect recovery happens for all $\x$ when $\Phi \Psi$ is the identity matrix. This
is not possible in general because ${\rm rank}(\Phi \Psi) \leq M <
N$; it is, however, possible to do this for signals with specific structure that we will define as bandlimited graph signals, as in classical discrete signal processing.

\subsection{Sampling Theory for Graph Signals}
\label{sec:ggs}
 We now define a class of bandlimited graph signals, which makes perfect recovery possible.

\begin{defn}
  \label{df:GBL}
  A graph signal is called~\emph{bandlimited} when there exists a $K \in
  \{0, 1, \cdots, N-1\}$ such that its graph Fourier transform
  $\widehat{\x}$ satisfies
  \begin{displaymath}
  \widehat{x}_k  =  0 \quad {\rm for~all~}  \quad k \geq K.
\end{displaymath}
The smallest such $K$ is called the~\emph{bandwidth} of $\x$. A graph
signal that is not bandlimited is called a \emph{full-band graph signal}.
\end{defn}
Note that the bandlimited graph signals here do not necessarily mean low-pass, or smooth. Since we do not specify the ordering of frequencies, we can reorder the eigenvalues and permute the corresponding eigenvectors in the graph Fourier transform matrix to choose any band in the graph Fourier domain. The bandlimited graph signals are smooth only when we sort the eigenvalues in a descending order. The bandlimited restriction here is equivalent to limiting the number of non-zero signal coefficients in the graph Fourier domain with known supports. This generalization is potentially useful to represent non-smooth graph signals.

\begin{defn}
  \label{df:GBLS}
  The set of graph signals in $\C^N$ with bandwidth of at most $K$ is
  a closed subspace denoted $\BL_K(\Vm^{-1})$, with $\Vm^{-1}$ as
  in~\eqref{eq:eigendecomposition}.
\end{defn}

When defining the bandwidth, we focus on the number of graph frequencies, while previous works~\cite{AnisGO:14} focus on the value of graph frequencies.
There are two shortcomings to using the values of graph frequencies: (a) When considering the values of graph frequencies, we ignore the discrete nature of graphs; because graph frequencies are discrete, two cut-off graph frequencies on the same graph can lead to the same bandlimited space. For example, assume a graph has graph frequencies 0, 0.1, 0.4, 0.6 and 2; when we set the cut-off frequency to either 0.2 or 0.3, they lead to the same bandlimited space; (b) The values of graph frequencies cannot be compared between different graphs. Since each graph has its own graph frequencies, a same value of the cut-off graph frequency on two graphs can mean different things. For example, one graph has graph frequencies as 0, 0.1, 0.2, 0.4 and 2, and another has graph frequencies 0, 1.1, 1.6, 1.8, and 2; when we set the cut-off frequency to 1, that is, we preserve all the graph frequencies that are no greater than 1, first graph preserves three out of four graph frequencies and the second graph only preserves one out of four. The values of graph frequencies thus do not necessarily give a direct and intuitive understanding about the bandlimited space. Another key advantage of using the number of graph frequencies is to build a strong connection to linear algebra allowing for the use of simple tools from linear algebra in sampling and interpolation of bandlimited graph signals.

In Theorem 5.2 in~\cite{VetterliKG:12}, the authors show the recovery for vectors via projection, which lays the theoretical foundation for the classical sampling theory. Following the theorem, we obtain the following result, the proof of which can be found in~\cite{ChenSK:15a}.
\begin{myThm}
  \label{thm:GPR}
  Let $\Psi$ satisfy
  \begin{equation}
  \nonumber
  \label{eq:Assumption}
    {\rm rank}( \Psi \Vm_{(K)}) = K,
  \end{equation}
  where $\Vm_{(K)} \in \R^{N \times K}$ denotes the first $K$ columns of $\Vm$. For all $\x \in \BL_K(\Vm^{-1})$, perfect recovery, $\x = \Phi \Psi \x$, is achieved by choosing 
  \begin{equation}
  \nonumber
  \label{eq:interpolation}
  \Phi = \Vm_{(K)} \Um,
  \end{equation}
with $\Um \Psi
  \Vm_{(K)}$ a $K \times K$ identity matrix.
\end{myThm}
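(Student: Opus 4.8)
The plan is to reduce the statement to a finite-dimensional linear-algebra fact about left inverses of full-column-rank matrices, mirroring Theorem~5.2 of~\cite{VetterliKG:12}.

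First I would unpack the definition of $\BL_K(\Vm^{-1})$: if $\x \in \BL_K(\Vm^{-1})$ then $\widehat{x}_k = 0$ for all $k \geq K$, so writing $\widehat{\x}_{(K)}$ for the first $K$ entries of $\widehat{\x}$, the inverse graph Fourier transform collapses to $\x = \Vm \widehat{\x} = \Vm_{(K)} \widehat{\x}_{(K)}$. Applying the sampling operator then gives $\x_\M = \Psi \x = (\Psi \Vm_{(K)}) \widehat{\x}_{(K)}$, so the whole problem is to invert the $M \times K$ matrix $\Psi \Vm_{(K)}$ on its column space.

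Next, the hypothesis ${\rm rank}(\Psi \Vm_{(K)}) = K$ says exactly that $\Psi \Vm_{(K)} \in \C^{M \times K}$ has full column rank, hence admits a left inverse; one explicit choice is the Moore--Penrose pseudoinverse $\Um = \big( \Vm_{(K)}^* \Psi^* \Psi \Vm_{(K)} \big)^{-1} \Vm_{(K)}^* \Psi^*$, which is well defined precisely because the $K \times K$ Gram matrix $\Vm_{(K)}^* \Psi^* \Psi \Vm_{(K)}$ is invertible under the rank hypothesis, and which satisfies $\Um \Psi \Vm_{(K)} = \Id_K$. With this choice, $\Um \x_\M = \Um \Psi \Vm_{(K)} \widehat{\x}_{(K)} = \widehat{\x}_{(K)}$, i.e.\ the nonzero graph Fourier coefficients are recovered exactly. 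Finally, setting $\Phi = \Vm_{(K)} \Um$ and chaining the identities yields $\Phi \Psi \x = \Vm_{(K)} \Um \Psi \Vm_{(K)} \widehat{\x}_{(K)} = \Vm_{(K)} \widehat{\x}_{(K)} = \x$, the claimed perfect recovery, and this holds for every $\x \in \BL_K(\Vm^{-1})$.

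I do not expect a genuine obstacle here; the one point to state carefully is that $\Um$ need only be \emph{some} left inverse of $\Psi \Vm_{(K)}$ — it is not unique when $M > K$ — and that existence is guaranteed by the full-column-rank hypothesis, with the pseudoinverse above being one convenient explicit representative (it is also the minimum-norm choice, which matters for the noise-robustness analysis later in the paper).
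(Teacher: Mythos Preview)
Your proposal is correct and matches the paper's approach: the paper does not give an in-text proof but simply states that the result follows Theorem~5.2 of~\cite{VetterliKG:12} (with details deferred to~\cite{ChenSK:15a}), which is exactly the projection/left-inverse argument you carry out. Your remark that any left inverse $\Um$ of $\Psi\Vm_{(K)}$ suffices (with the pseudoinverse as one explicit choice when $M>K$) is also consistent with the paper's subsequent discussion.
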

Theorem~\ref{thm:GPR} is applicable for all graph signals that have a few non-zero elements in the graph Fourier domain with known supports, that is, $K < N$.

Similarly to the classical sampling theory, the sampling rate has a lower bound for graph signals as well, that is, the sample size $M$ should be no smaller than the bandwidth $K$. When $M < K$, rank$(\Um \Psi
  \Vm_{(K)}) \leq$ rank$(\Um) \leq M < K$, and thus, $\Um \Psi
  \Vm_{(K)}$ can never be an identity matrix. For $\Um \Psi  \Vm_{(K)}$ to be an identity matrix,  $\Um$ is the inverse of $\Psi  \Vm_{(K)}$ when $M = K$; it is a pseudo-inverse of $\Psi  \Vm_{(K)}$ when $M > K$, where the redundancy can be useful for reducing the influence of noise. For simplicity, we only consider $M = K$ and $\Um$ invertible. When $M > K$, we simply select $K$ out of $M$ sampled signal coefficients to ensure that the sample size and the bandwidth are the same.

From Theorem~\ref{thm:GPR}, we see that an arbitrary sampling operator
may not lead to perfect recovery even for bandlimited graph signals. When the sampling operator $\Psi$ satisfies the full-rank assumption~\eqref{eq:Assumption}, we call it a~\emph{qualified sampling operator}. To satisfy~\eqref{eq:Assumption}, the sampling operator should select at least one set of $K$
linearly-independent rows in $\Vm_{(K)}$. Since $\Vm$ is invertible, the column vectors in $\Vm$ are linearly
independent and rank$(\Vm_{(K)}) = K$ always holds; in other words, at
least one set of $K$ linearly-independent rows in $\Vm_{(K)}$ always
exists.  Since the graph shift $\Adj$ is given, one can find such a set
independently of the graph signal. Given such a set,
Theorem~\ref{thm:GPR} guarantees perfect recovery of bandlimited graph
signals. To find linearly-independent
rows in a matrix, fast algorithms exist, such as QR decomposition;
see~\cite{HornJ:85,VetterliKG:12}. Since we only need to know the graph structure to design a~\emph{qualified sampling operator}, this follows the experimentally designed sampling. We will expand this topic in Section~\ref{sec:qualsSO}.

\subsection{Sampled Graph Signal}
We just showed that perfect recovery is possible when the graph signal is bandlimited.  We now show that the sampled signal coefficients form a new graph signal, whose corresponding graph shift can be constructed from the original graph shift.

Although the following results can be generalized to $M > K$ easily, we only consider $M = K$ for simplicity. Let the sampling operator $\Psi$ and the interpolation operator~$\Phi$ satisfy the conditions in Theorem~\ref{thm:GPR}. For all $\x \in \BL_{K} (\Vm^{-1})$, we have
\begin{eqnarray}
\label{eq:recovery}
\x \ = \ \Phi \Psi \x  \ = \  \Phi \x_\M & \stackrel{(a)}{=} & \Vm_{(K)} \Um \x_\M
\nonumber \\ \nonumber
& \stackrel{(b)}{=}  & \Vm_{(K)}  \widehat{\x}_{(K)},
\end{eqnarray}
where $\widehat{\x}_{(K)}$ denotes the first $K$ coefficients of $\widehat{\x}$, (a) follows from Theorem~\ref{thm:GPR} and (b) from Definition~\ref{df:GBL}. We thus get
\begin{equation}
\nonumber
\widehat{\x}_{(K)} \ = \ \Um \x_\M,
\end{equation}
 and
\begin{eqnarray}
\nonumber
\x_\M \ = \  \Um^{-1}  \Um \x_\M  =\Um^{-1}  \widehat{\x}_{(K)}.
\end{eqnarray}
From what we have seen, the sampled signal coefficients $x_\M$ and the frequency content $\widehat{\x}_{(K)}$ form a Fourier pair because  $\x_\M$ can be constructed from $ \widehat{\x}_{(K)}$ through $\Um^{-1}$ and $ \widehat{\x}_{(K)}$ can also be constructed from $\x_\M$ through $\Um$. This implies that, according to Definition~\ref{df:graph_FT} and the spectral decomposition~\eqref{eq:eigendecomposition}, $\x_\M$ is a graph signal associated with the graph Fourier transform matrix $\Um$ and  a new graph shift
\begin{displaymath}
\Adj_\M = \Um^{-1} \Lambda_{(K)} \Um \in \C^{K \times K},
\end{displaymath}
where $\Lambda_{(K)} \in \C^{K \times K}$ is a diagonal matrix that samples the first $K$ eigenvalues of $\Lambda$. This leads to the following theorem.
\begin{myThm}
\label{thm:sg}
Let $\x \in \BL_{K}(\Vm^{-1})$ and let
\begin{displaymath}
\x_\M =  \Psi x \in \C^K
\end{displaymath}
 be its sampled version, where $\Psi$ is a qualified sampling operator. Then,  the graph shift associated with the graph signal $\x_\M$ is
\begin{equation}
  \Adj_\M = \Um^{-1} \Lambda_{(K)} \Um \in \C^{K \times K},
\end{equation}
with $\Um = (\Psi\Vm_{(K)})^{-1}$.
\end{myThm}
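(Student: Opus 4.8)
The plan is to verify that the sampled coefficients $\x_\M$ genuinely form a graph signal in the sense of Section~\ref{sec:DSPG}, i.e.\ to exhibit a $K \times K$ graph shift $\Adj_\M$ for which the matrix $\Um$ plays the role of the graph Fourier transform matrix in the sense of Definition~\ref{df:graph_FT}, so that the spectral representation of $\x_\M$ is consistent with the frequency content $\widehat{\x}_{(K)}$ of the original signal. The argument is essentially the chain of identities displayed just before the theorem, which I would organize into three steps. First, because $\Psi$ is a qualified sampling operator, assumption~\eqref{eq:Assumption} gives $\mathrm{rank}(\Psi\Vm_{(K)}) = K$; since $\Psi\Vm_{(K)}$ is $K \times K$ it is invertible, so $\Um = (\Psi\Vm_{(K)})^{-1}$ is well defined and $\Um\Psi\Vm_{(K)}$ is the $K \times K$ identity. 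Hence $(\Psi,\Phi)$ with $\Phi = \Vm_{(K)}\Um$ satisfies the hypotheses of Theorem~\ref{thm:GPR}.

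Second, I would establish that $\x_\M$ and $\widehat{\x}_{(K)}$ form a Fourier pair through $\Um$. For $\x \in \BL_K(\Vm^{-1})$, Theorem~\ref{thm:GPR} gives $\x = \Phi\Psi\x = \Vm_{(K)}\Um\x_\M$, while Definition~\ref{df:GBL} gives $\x = \Vm_{(K)}\widehat{\x}_{(K)}$. Subtracting yields $\Vm_{(K)}\big(\Um\x_\M - \widehat{\x}_{(K)}\big) = 0$; since the columns of $\Vm_{(K)}$ are columns of the invertible matrix $\Vm$, they are linearly independent, so $\Vm_{(K)}$ has trivial null space and therefore $\widehat{\x}_{(K)} = \Um\x_\M$, equivalently $\x_\M = \Um^{-1}\widehat{\x}_{(K)}$. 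Thus $\Um$ maps $\x_\M$ bijectively to $\widehat{\x}_{(K)}$ and $\Um^{-1}$ maps it back.

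Third, I would translate this Fourier pairing into the \DSPG\ language. By Definition~\ref{df:graph_FT} together with the spectral decomposition~\eqref{eq:eigendecomposition}, a matrix $\Um$ is the graph Fourier transform matrix of a graph shift $\Adj_\M$ precisely when $\Adj_\M = \Um^{-1}\Lambda_\M\Um$ for some diagonal matrix $\Lambda_\M$ of graph frequencies. To make the spectrum of the sampled signal agree with that of the original one, i.e.\ so that the $k$-th entry of $\widehat{\x}_{(K)}$ still corresponds to the graph frequency $\lambda_k$, we are led to take $\Lambda_\M = \Lambda_{(K)}$, the diagonal matrix of the first $K$ eigenvalues of $\Lambda$. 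This gives $\Adj_\M = \Um^{-1}\Lambda_{(K)}\Um$ with $\Um = (\Psi\Vm_{(K)})^{-1}$, as claimed.

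The genuinely delicate point here is conceptual rather than computational: ``the graph shift associated with $\x_\M$'' is a modeling decision, not something forced by linear algebra alone. The Fourier-pair argument pins down the eigenvectors of $\Adj_\M$ (the columns of $\Um^{-1}$) but leaves its eigenvalues free; one must argue that inheriting the first $K$ graph frequencies of the original shift is the right choice. I would justify this by appealing to consistency with classical sampling theory and, as developed later in the paper, to the fact that this choice preserves the first-order difference of the original graph signal, and so I would present this step as a motivated definition rather than as a uniqueness calculation that does not hold without the frequency-inheritance convention.
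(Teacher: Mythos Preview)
Your proposal is correct and follows essentially the same route as the paper: the discussion preceding Theorem~\ref{thm:sg} establishes the Fourier pair $\widehat{\x}_{(K)} = \Um\x_\M$ via Theorem~\ref{thm:GPR} and Definition~\ref{df:GBL}, then invokes Definition~\ref{df:graph_FT} and~\eqref{eq:eigendecomposition} to read off $\Adj_\M = \Um^{-1}\Lambda_{(K)}\Um$. Your added care in justifying injectivity of $\Vm_{(K)}$ and your closing remark that the choice of $\Lambda_{(K)}$ is a modeling convention rather than a forced uniqueness are both accurate and worth keeping.
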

From Theorem~\ref{thm:sg}, we see that the graph shift $\Adj_\M$ is constructed by sampling the rows of the eigenvector matrix and sampling the first $K$ eigenvalues of the original graph shift $\Adj$. We simply say that $\Adj_\M$ is sampled from $\Adj$, preserving certain information in the graph Fourier domain.

Since the bandwidth of $\x$ is $K$, the first $K$ coefficients in the frequency domain are $\widehat{\x}_{(K)} = \widehat{\x}_\M$, and the other $N-K$ coefficients are $\widehat{\x}_{(-K)} = 0$; in other words, the frequency contents of the original graph signal $\x$ and the sampled graph signal $\x_\M$ are equivalent after performing their corresponding  graph Fourier transforms. 

Similarly to Theorem~\ref{thm:GPR}, by reordering the eigenvalues and permuting the corresponding eigenvectors in the graph Fourier transform matrix, Theorem~\ref{thm:sg} is applicable to all graph signals that have limited support in the graph Fourier domain.

\begin{figure*}[t]
  \begin{center}
     \includegraphics[width= 1.2\columnwidth]{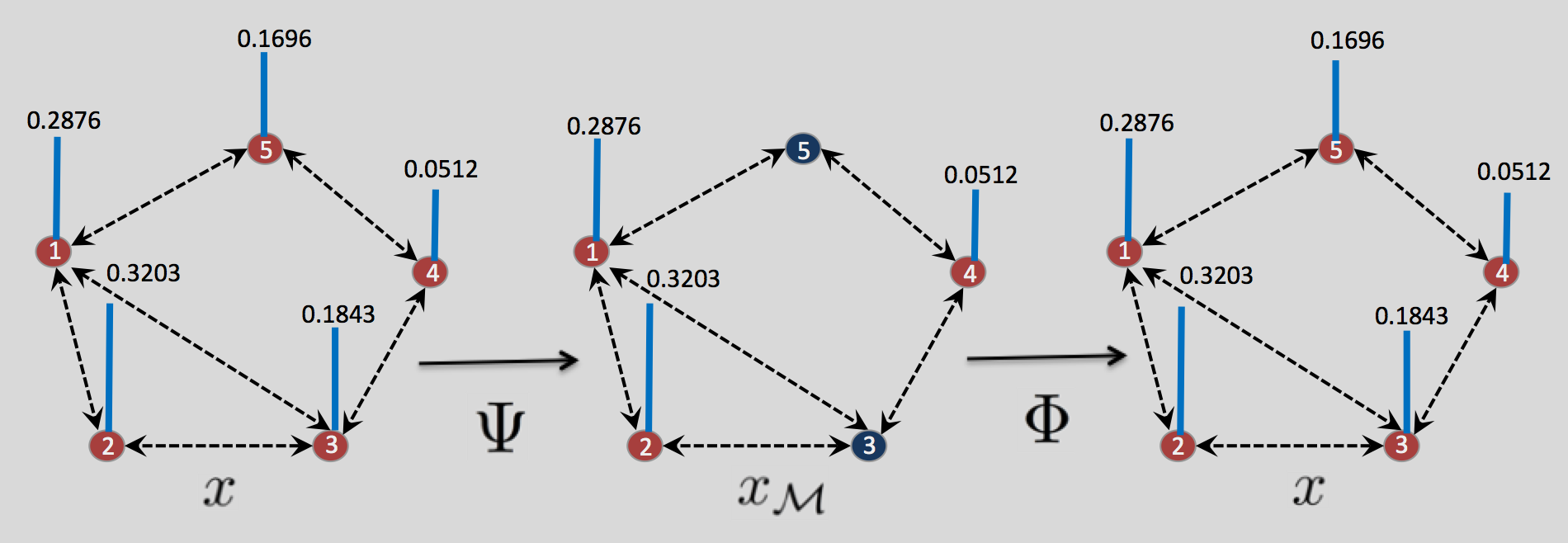}
  \end{center}
  \caption{\label{fig:sampling_signal}  Sampling followed by interpolation. The arrows indicate that the edges are directed.}
  \vspace{-0.15in}
\end{figure*}

\subsection{Property of A Sampled Graph Signal}
We argued that $ \Adj_\M = \Um^{-1} \Lambda_{(K)} \Um$ is the graph shift that supports the sampled signal coefficients $\x_\M$ following from a mathematical equivalence between the graph Fourier transform for the sampled graph signal and the graph shift. We, in fact, implicitly proposed an approach to sampling graphs. Since sampled graphs always lose information, we now study which information $\Adj_\M$ preserves.

\begin{myThm}
\label{thm:sg_prop}
For all $\x \in \BL_{K}(\Vm^{-1})$,
$$ \x_\M - \Adj_\M \x_\M \ = \  \Psi  \left( x - \Adj x \right).
$$ 
\end{myThm}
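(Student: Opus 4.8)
The plan is to expand both sides in terms of the frequency content $\widehat{\x}_{(K)}$ of $\x$ and verify that the two expressions coincide. I would use only facts already established above: for $\x \in \BL_{K}(\Vm^{-1})$ we have $\x = \Vm_{(K)}\widehat{\x}_{(K)}$, $\x_\M = \Um^{-1}\widehat{\x}_{(K)}$, $\Adj_\M = \Um^{-1}\Lambda_{(K)}\Um$ with $\Um = (\Psi\Vm_{(K)})^{-1}$, and hence $\Psi\Vm_{(K)} = \Um^{-1}$.

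\textbf{Left-hand side.} Substituting $\x_\M = \Um^{-1}\widehat{\x}_{(K)}$ and $\Adj_\M = \Um^{-1}\Lambda_{(K)}\Um$ and cancelling $\Um\Um^{-1}$,
\[ \x_\M - \Adj_\M\,\x_\M \;=\; \Um^{-1}\widehat{\x}_{(K)} - \Um^{-1}\Lambda_{(K)}\widehat{\x}_{(K)} \;=\; \Um^{-1}\bigl(\Id - \Lambda_{(K)}\bigr)\widehat{\x}_{(K)}, \]
where $\Id$ denotes the $K\times K$ identity.

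\textbf{Right-hand side.} The one substantive step is that the graph shift sends a bandlimited signal to a bandlimited signal: since $\Adj=\Vm\Lambda\Vm^{-1}$ and $\widehat{\x}=\Vm^{-1}\x$ is supported on its first $K$ entries, $\Adj\x = \Vm\Lambda\widehat{\x} = \Vm_{(K)}\Lambda_{(K)}\widehat{\x}_{(K)}$, because $\Lambda$ is diagonal and therefore preserves the support of $\widehat{\x}$. Consequently $\x - \Adj\x = \Vm_{(K)}\bigl(\Id - \Lambda_{(K)}\bigr)\widehat{\x}_{(K)}$, and applying $\Psi$ together with $\Psi\Vm_{(K)} = \Um^{-1}$ gives $\Psi(\x - \Adj\x) = \Um^{-1}\bigl(\Id - \Lambda_{(K)}\bigr)\widehat{\x}_{(K)}$, which is exactly the left-hand side.

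So the proof is mostly bookkeeping with the $(K)$-restrictions; the only place that needs care is the ``bandlimited-in, bandlimited-out'' identity $\Vm\Lambda\widehat{\x} = \Vm_{(K)}\Lambda_{(K)}\widehat{\x}_{(K)}$, which is precisely what lets the original $N\times N$ objects collapse onto their $K\times K$ sampled counterparts. I would also remark, as after Theorem~\ref{thm:sg}, that permuting eigenvalues/eigenvectors extends the statement to an arbitrary $K$-dimensional band, and that for $M>K$ one repeats the same computation with $\Um$ a left inverse of $\Psi\Vm_{(K)}$.
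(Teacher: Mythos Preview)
Your proof is correct and follows essentially the same route as the paper: both reduce each side to the common expression $\Um^{-1}(\Id-\Lambda_{(K)})\widehat{\x}_{(K)}=\Psi\Vm_{(K)}(\Id-\Lambda_{(K)})\widehat{\x}_{(K)}$ using $\x_\M=\Um^{-1}\widehat{\x}_{(K)}$, $\Adj_\M=\Um^{-1}\Lambda_{(K)}\Um$, and the bandlimited identity $\x-\Adj\x=\Vm_{(K)}(\Id-\Lambda_{(K)})\widehat{\x}_{(K)}$. The only cosmetic difference is that the paper writes a single chain from left to right, whereas you compute the two sides separately and match them.
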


\begin{proof}
\begin{eqnarray*}
 \x_\M - \Adj_\M \x_\M & = &   \Um^{-1}  \widehat{\x}_\M -  \Um^{-1} \Lambda_{(K)} \Um  \Um^{-1}  \widehat{\x}_\M
 \\
 & = & \Psi  \Vm_{(K)}  (\Id - \Lambda_{(K)} ) \widehat{\x}_{(K)}
 \\
 & = & \Psi  \left( x - \Adj x \right),
\end{eqnarray*}
where the last equality follows from $\x \in \BL_{K}(\Vm^{-1})$.
\end{proof}
The term $x - \Adj x$ measures the difference between the original graph signal and its shifted version. This is also called the first-order difference of $\x$, while the term  $\Psi  \left( x - \Adj x \right)$ measures  the first-order difference of $\x$ at sampled indices. Furthermore, $\left\|\x - \Adj \x \right\|_p^p$ is the graph total variation based on the~\emph{$\ell_p$-norm}, which is a quantitative characteristic that measures the smoothness of a graph signal~\cite{SandryhailaM:131}. When using a sampled graph to represent the sampled signal coefficients, we lose the connectivity information between the sampled nodes and all the other nodes; despite this,  $\Adj_\M$ still preserves the first-order difference at sampled indices. Instead of focusing on preserving connectivity properties as in prior work~\cite{SpielmanS:11}, we emphasize the interplay between signals and structures.

\subsection{Example}
\label{sec:toy}
We consider a five-node directed graph with  graph shift
$$
\Adj = \left[  \begin{array}{llllll}
  0 &  \frac{2}{5} & \frac{2}{5} &  0 & \frac{1}{5} \\
  \frac{2}{3} &  0 & \frac{1}{3} & 0 & 0\\
  \frac{1}{2} &  \frac{1}{4} & 0 & \frac{1}{4} & 0\\
   0 & 0 & \frac{1}{2} & 0 &  \frac{1}{2} \\
  \frac{1}{2} & 0 &  0 & \frac{1}{2} & 0
\end{array} \right].
$$
The corresponding inverse graph Fourier transform matrix is
$$
\Vm =  \left[  \begin{array}{llllll}
0.45 &  \phantom{+}0.19 &  \phantom{+}0.25 &  \phantom{+}0.35 & -0.40 \\
0.45 &  \phantom{+}0.40 &  \phantom{+}0.16 & -0.74 & \phantom{+}0.18 \\ 
0.45 &  \phantom{+}0.08 &  -0.56 & \phantom{+}0.29 & \phantom{+}0.36\\ 
0.45 &  -0.66 & -0.41 & -0.47  &  -0.57 \\
0.45 &  -0.60 &  \phantom{+}0.66 & \phantom{+}0.13  &  \phantom{+}0.59 
\end{array} \right],
$$
and the frequencies are
$$
\Lambda =  {\rm diag} \begin{bmatrix}
		1 &  0.39 & -0.12 & -0.44 & -0.83
 	\end{bmatrix}.
$$
Let $K = 3$; generate a bandlimited graph signal $\x \in \BL_3(\Vm^{-1})$ as 
$$
\widehat{\x} = \begin{bmatrix}  0.5 & 0.2 & 0.1 & 0 & 0 \end{bmatrix}^T,
$$
with
$$
\x = \begin{bmatrix}  0.29 & 0.32 & 0.18 & 0.05 & 0.17 \end{bmatrix}^T,
$$
and the first-order difference of $x$ is
$$
\x - \Adj \x =  \begin{bmatrix}  0.05 & 0.07 & -0.05 & -0.13 & 0.0002 \end{bmatrix}^T.
$$
We can check the first three columns of $\Vm$ to see that all sets of
three rows are independent. According to the sampling theorem, we can then
recover $\x$ perfectly by sampling any three of its coefficients;
for example, sample the first, second and the fourth coefficients. Then, $\M = (1,2,4)$, $\x_\M
= \begin{bmatrix} 0.29 & 0.32 & 0.05  \end{bmatrix}^T$, and the sampling
operator
$$
\Psi = \begin{bmatrix}
1 &  0 & 0 & 0 & 0\\
0 &  1 & 0 & 0 & 0\\
0 &  0 & 0 & 1 & 0
\end{bmatrix}
$$
is qualified. We recover $\x$ by using the following interpolation operator (see Figure~\ref{fig:sampling_signal})
$$
\Phi = \Vm_{(3)} (\Psi \Vm_{(3)})^{-1}
=
\left[  \begin{array}{llllll}
\phantom{+}1 &  \phantom{+}0  & \phantom{+}0 \\
\phantom{+}0 &  \phantom{+}1  & \phantom{+}0  \\
-2.7  & \phantom{+}2.87  & \phantom{+}0.83 \\
\phantom{+}0 & \phantom{+}0 & \phantom{+}1 \\
\phantom{+}5.04 & -3.98  & -0.05
\end{array} \right].
$$
The inverse graph Fourier transform matrix for the sampled signal is 
\begin{displaymath}
\Um^{-1} = \Psi \Vm_{(3)} = 
\left[  \begin{array}{llllll}
0.45 &  \phantom{+}0.19 & \phantom{+}0.25  \\
0.45  &  \phantom{+}0.40  & \phantom{+}0.16 \\ 
0.45  &  -0.66  & -0.41 
\end{array} \right],
\end{displaymath}
and the sampled frequencies are
\begin{displaymath}
\Lambda_{(3)} = 
\left[  \begin{array}{llllll}
1 &  0  & \phantom{+}0\\
0 &  0.39 & \phantom{+}0 \\
0 &  0 &  -0.12
\end{array} \right].
\end{displaymath}
The sampled graph shift is then constructed as
\begin{displaymath}
\Adj_\M = \Um^{-1} \Lambda_{(3)} \Um = 
\left[  \begin{array}{llllll}
\phantom{-}0.07  &  \phantom{+}0.75  & \phantom{+}0.32 \\
-0.23  &  \phantom{+}0.96  & \phantom{+}0.28 \\
\phantom{+}1.17  &  -0.56  & \phantom{+}0.39
\end{array} \right].
\end{displaymath}
The first-order difference of $\x_\M$ is
$$
\x_\M - \Adj_\M \x_\M =  \begin{bmatrix}  0.05 & 0.07 & -0.13 \end{bmatrix}^T \ = \  \Psi (\x - \Adj \x).
$$
We see that the sampled graph shift contains self-loops and negative weights and  seems to be dissimilar to $\Adj$, but $\Adj_\M$ preserves a part of the frequency content of $\Adj$ because $\Um^{-1}$ is sampled from $\Vm$ and $\Lambda_{(3)}$ is sampled from $\Adj$. $\Adj_\M$ also preserves the first-order difference of $x$, which validates Theorem~\ref{thm:sg_prop}.

\begin{figure}[t]
  \begin{center}
     \includegraphics[width= 0.95\columnwidth]{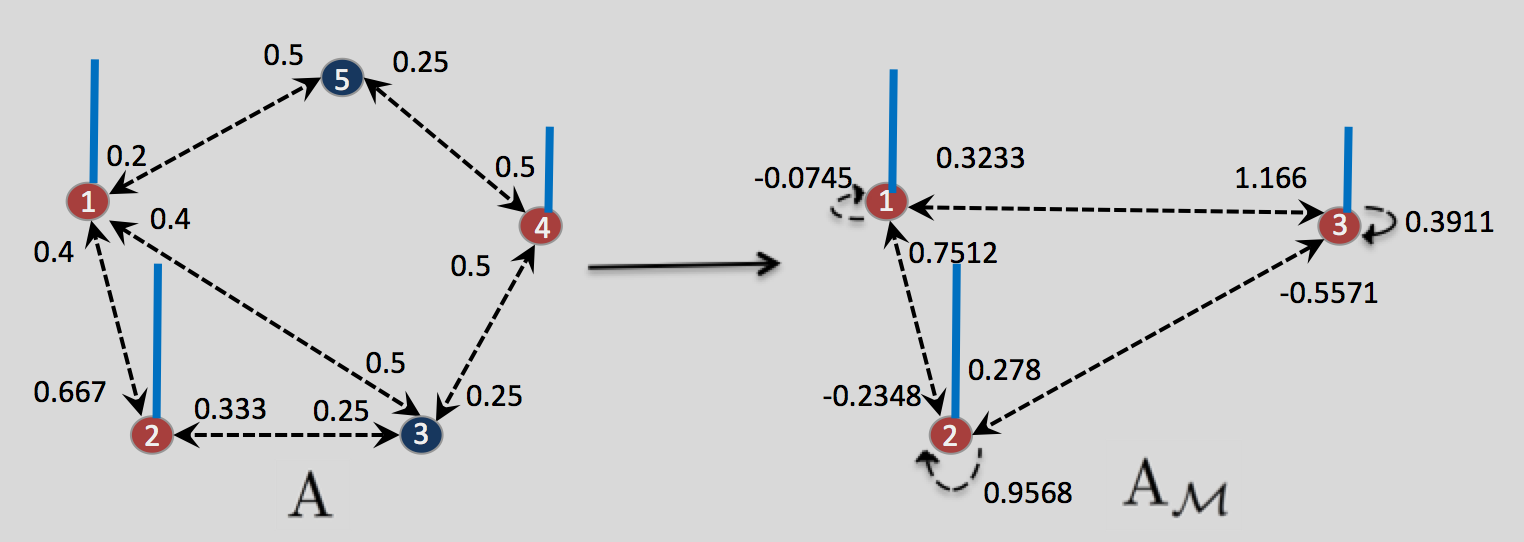}
  \end{center}
  \caption{\label{fig:sampling_graph}  Sampling a graph.}
  \vspace{-0.15in}
\end{figure}

\section{Sampling with A Qualified Sampling Operator}
\label{sec:qualsSO}
As shown in Section~\ref{sec:STG}, only a qualified sampling operator~\eqref{eq:Assumption} can lead to perfect recovery for bandlimited graph signals.  Since a qualified sampling operator~\eqref{eq:Assumption} is designed via the graph structure, it belongs to experimentally designed sampling. The design consist in finding $K$ linearly independent rows in $\Vm_{(K)}$, which gives multiple choices. In this section, we propose an optimal approach to designing a qualified sampling operators by minimizing the effect of noise for general graphs. We then show that for some specific graphs, random sampling also leads to perfect recovery with high probability.

\subsection{Experimentally Designed Sampling}
We now show how to design a qualified sampling operator on any given graph that is robust to noise. We then compare this optimal sampling operator with a random sampling operator on a sensor network.

\subsubsection{Optimal Sampling Operator}
As mentioned in Section~\ref{sec:ggs}, at least one set of $K$ linearly-independent rows in $\Vm_{(K)}$ always exists.
When we have multiple choices of $K$ linearly-independent rows, we aim to find the optimal one to minimize the effect of noise.


We consider a model where noise $\e$ is introduced during sampling as follows,
\begin{eqnarray}
\nonumber
 \x_\M & = & \Psi \x + \e,
 \end{eqnarray}
where $\Psi$ is a qualified sampling operator. The recovered graph signal, $\x_e'$, is then
 \begin{eqnarray}
 \nonumber
  \x_e' & = & \Phi \x_\M  \ = \ \Phi \Psi \x + \Phi \e \ = \ \x + \Phi \e.
 \end{eqnarray}
To bound the effect of noise, we have
\begin{eqnarray*}
 \left\| \x' - \x \right\|_2 & = &   \left\| \Phi \e  \right\|_2 
\ = \  \left\| \Vm_{(K)} \Um \e  \right\|_2
\\
& \leq &  \left\| \Vm_{(K)} \right\|_2  \left\| |\Um  \right\|_2  \left\| \e \right\|_2,
\end{eqnarray*}
where the inequality follows from the definition of the spectral norm. Since $\left\|\Vm_{(K)}  \right\|_2$ and $\left\|\e  \right\|_2$ are fixed, we want $\Um$ to have a small spectral norm. From this perspective, for each feasible $\Psi$, we compute the inverse or pseudo-inverse of $\Psi \Vm_{(K)}$ to obtain $\Um$; the best choice comes from the $\Um$ with the smallest spectral norm. This is equivalent to maximizing the smallest singular value of $\Psi \Vm_{(K)}$, 
\begin{equation}
\label{eq:optimalset}
\Psi^{opt} \ = \ \arg \max_{\Psi}  \sigma_{\min} (\Psi \Vm_{(K)} ),
\end{equation}
where $\sigma_{\min}$ denotes the smallest singular value. The solution of~\eqref{eq:optimalset} is optimal in terms of minimizing the effect of noise; we simply call it~\emph{optimal sampling operator}. Since we restrict the form of $\Psi$ in  \eqref{eq:Psi},~\eqref{eq:optimalset} is non-deterministic polynomial-time hard. To solve~\eqref{eq:optimalset}, we can use a greedy algorithm as shown in Algorithm~\ref{alg:optimalset}. In a previous work, the authors solved a similar optimization problem for matrix approximation and showed that the greedy algorithm gives a good approximation to the global optimum~\cite{AvronB:13}. Note that $\M$ is the sampling sequence, indicating which rows to select, and  $(\Vm_{(K)})_\M$ denotes the sampled rows from $\Vm_{(K)}$.  When increasing the number of samples, the smallest singular value of  $\Psi \Vm_{(K)}$ grows, and thus, redundant samples make the algorithm robust to noise.  

\begin{algorithm}[h]
  \footnotesize
  \caption{\label{alg:optimalset} Optimal Sampling Operator via Greedy Algorithm}
  \begin{tabular}{@{}lll@{}}
    \addlinespace[1mm]
   {\bf Input} 
      & $\Vm_{(K)}$~~the first $K$ columns of $\Vm$ \\
      & $M$~~~~~~the number of samples \\
     {\bf Output}  
      & $\M$~~~~~~sampling set \\
    \addlinespace[2mm]
    {\bf Function} & &\\
    & while $|\M| < M$ \\ 
    &~~~$m \ = \ \arg \max_{i}  \sigma_{\min} \left( (\Vm_{(K)})_{\M + \{ i\}} \right)$ \\
    &~~~$\M \leftarrow \M + \{ m\} $\\
    & end \\
    & {\bf return} $\M$ \\  
     \addlinespace[1mm]
  \end{tabular}
\end{algorithm}

\subsubsection{Simulations}
\label{sec:sensor_optimal}
We consider 150 weather stations in the United States that record local temperatures~\cite{SandryhailaM:13}. The graph
representing these weather stations is obtained by assigning an edge when the geodesic distance between each pair of weather stations is smaller than 500 miles, that is, the graph shift $\Adj$ is formed as 
\begin{displaymath}
\Adj_{i,j} = \begin{cases} 1,&\mbox{when }  0 < d_{i,j} < 500; \\ 
0, & \mbox{otherwise}, \end{cases} 
\end{displaymath}
where $d_{i,j}$ is the geodesic distance between the $i$th and the
$j$th weather stations. 

We simulate a graph signal with bandwidth of 3 as
\begin{displaymath}
\x = \vv_1 + 0.5 \vv_2 + 2 \vv_3,
\end{displaymath}
where $\vv_i$ is the $i$th column in $\Vm$. We design two sampling operators to recover $x$: an arbitrary qualified sampling operator and the optimal sampling operator. We know that both of them recover $x$ perfectly given 3 samples.  To validate the robustness to noise, we add Gaussian noise with mean zero and variance $0.01$ to each sample. We recover the graph signal from its samples by using the interpolation operator~\eqref{eq:interpolation}.

 Figure~\ref{fig:geo_graph_signals} (f) and (h) show the recovered graph signal from each of these two sampling operators. We see that an arbitrary qualified sampling operator is not robust to noise and fails to recover the original graph signal, while the optimal sampling operator is robust to noise and approximately recovers the original graph signal.

\begin{figure*}[htb]
  \begin{center}
    \begin{tabular}{cccc}
\includegraphics[width=0.6\columnwidth]{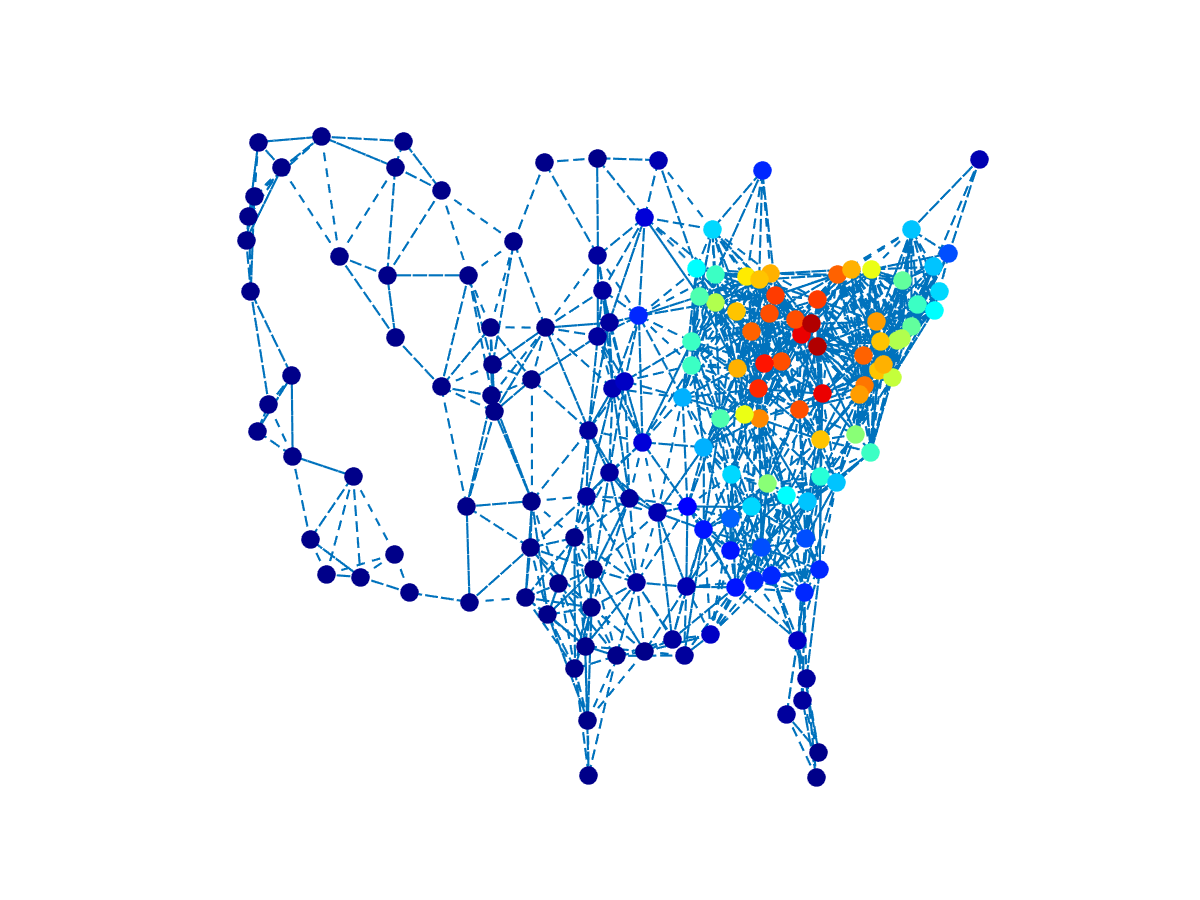}  & \includegraphics[width=0.6\columnwidth]{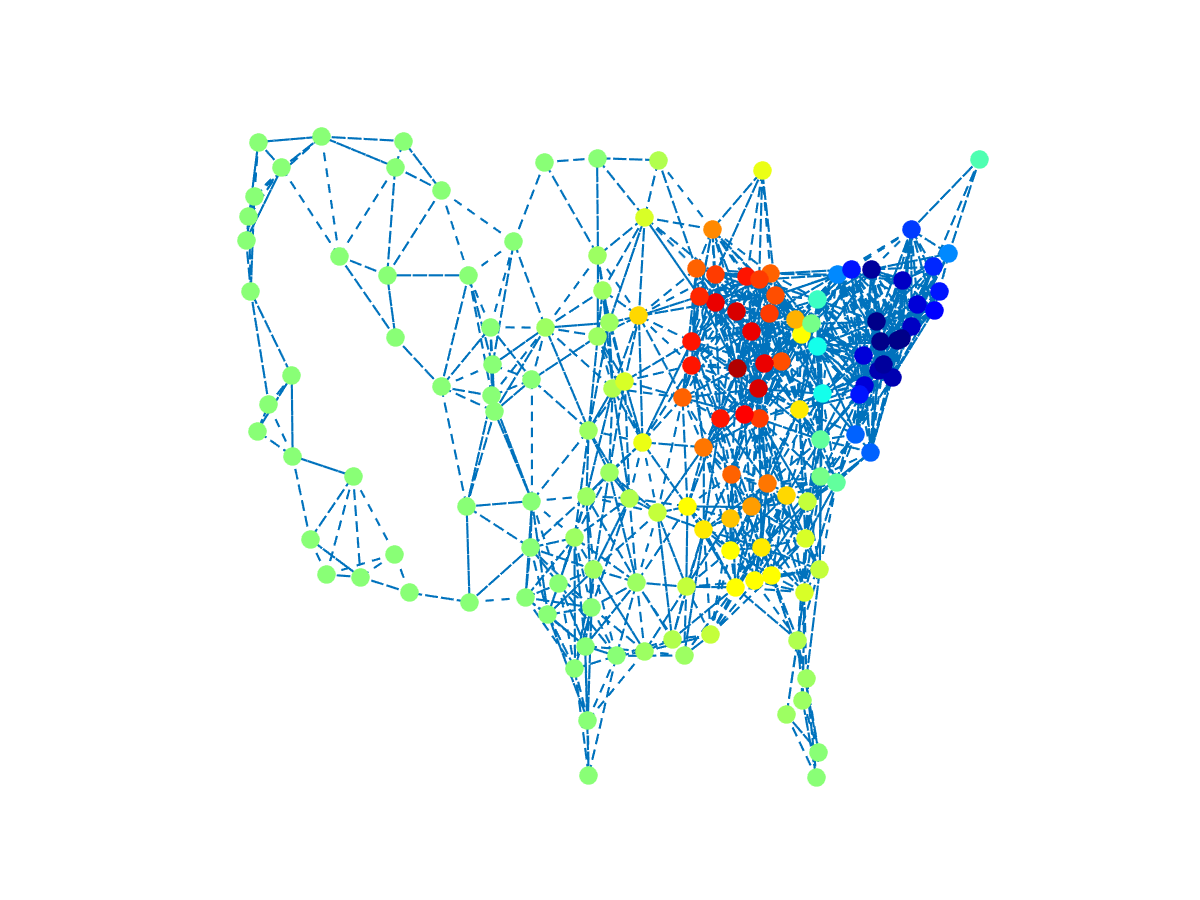}  &
\includegraphics[width=0.6\columnwidth]{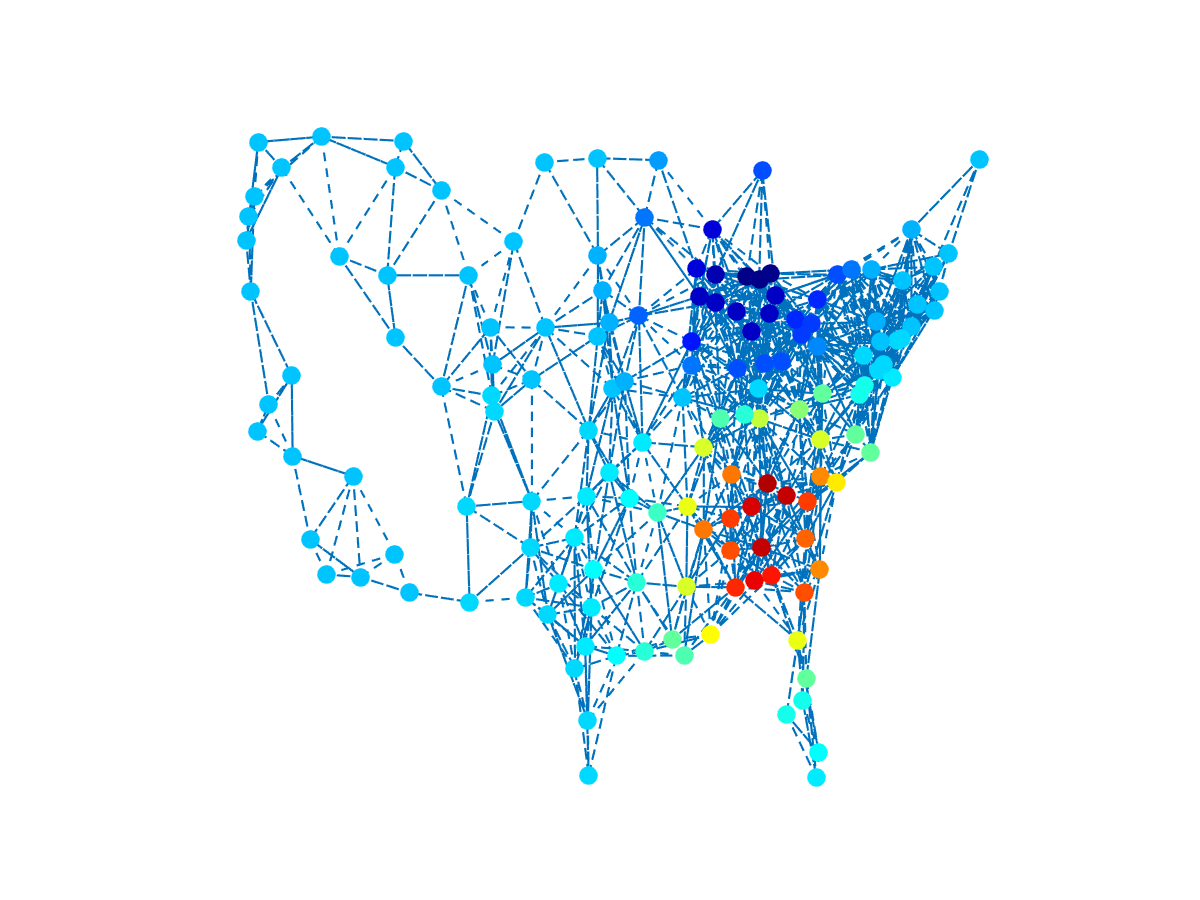} 
\\
 {\small (a) 1st basis vector $\vv_1$.} & {\small (b) 2nd basis vector $\vv_2$.} & {\small (c) 3rd basis vector$\vv_3$.}
 \\
\includegraphics[width=0.6\columnwidth]{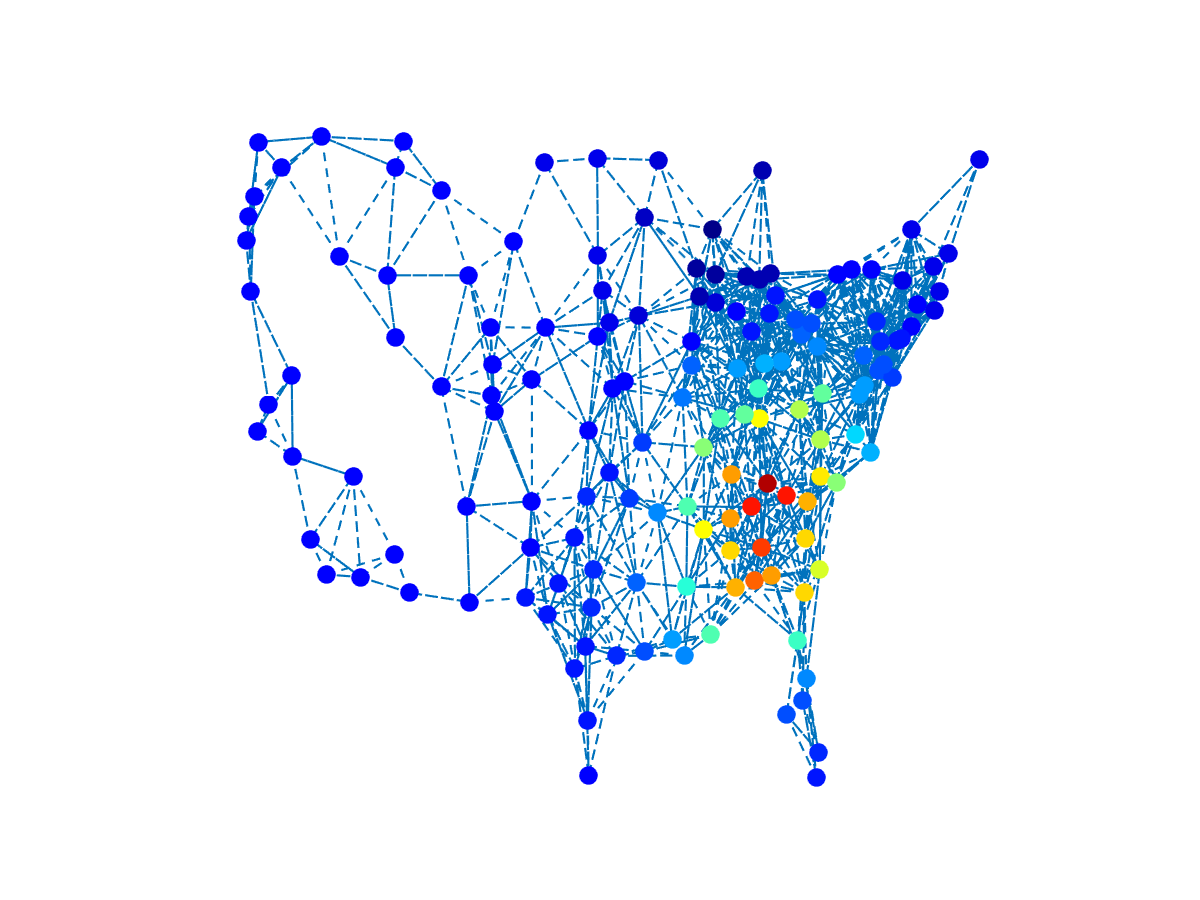}  &
\includegraphics[width=0.6\columnwidth]{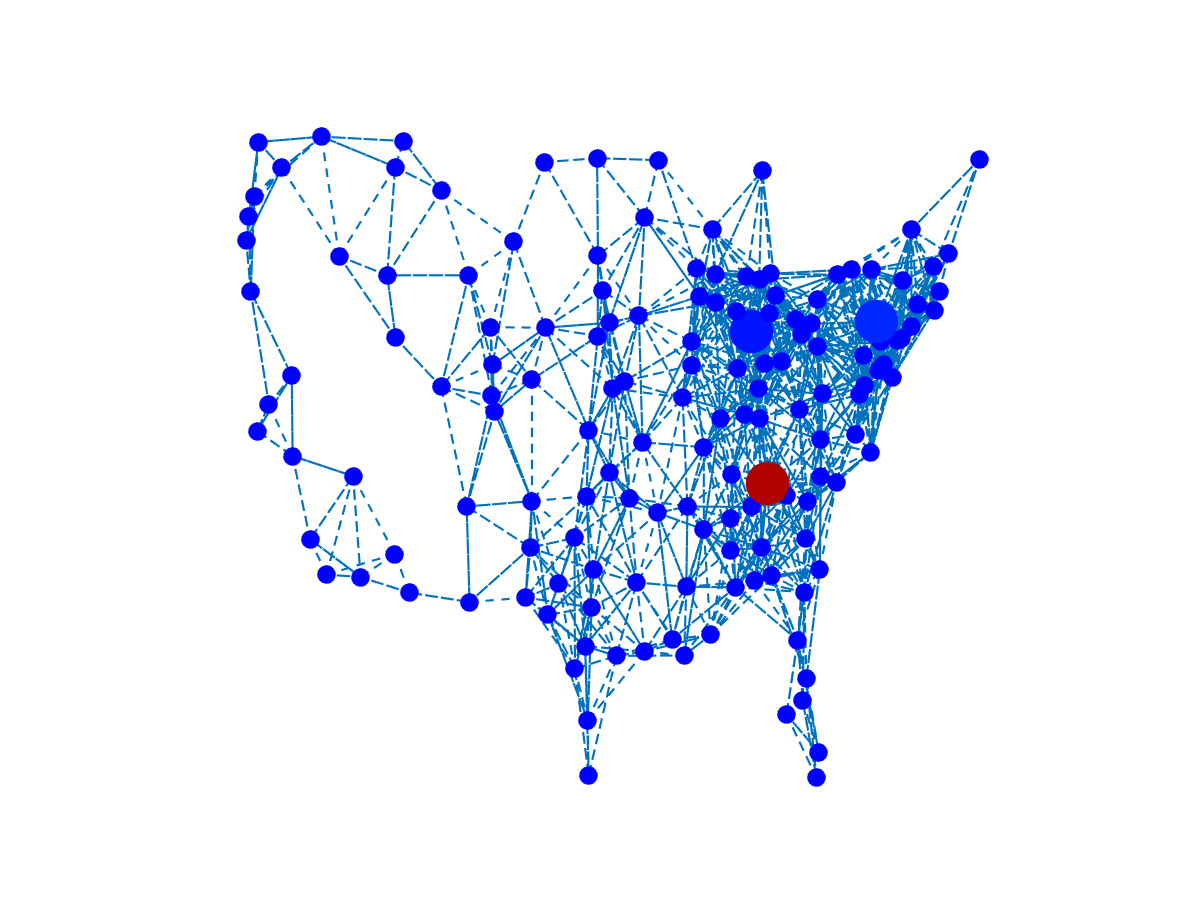}   & \includegraphics[width=0.6\columnwidth]{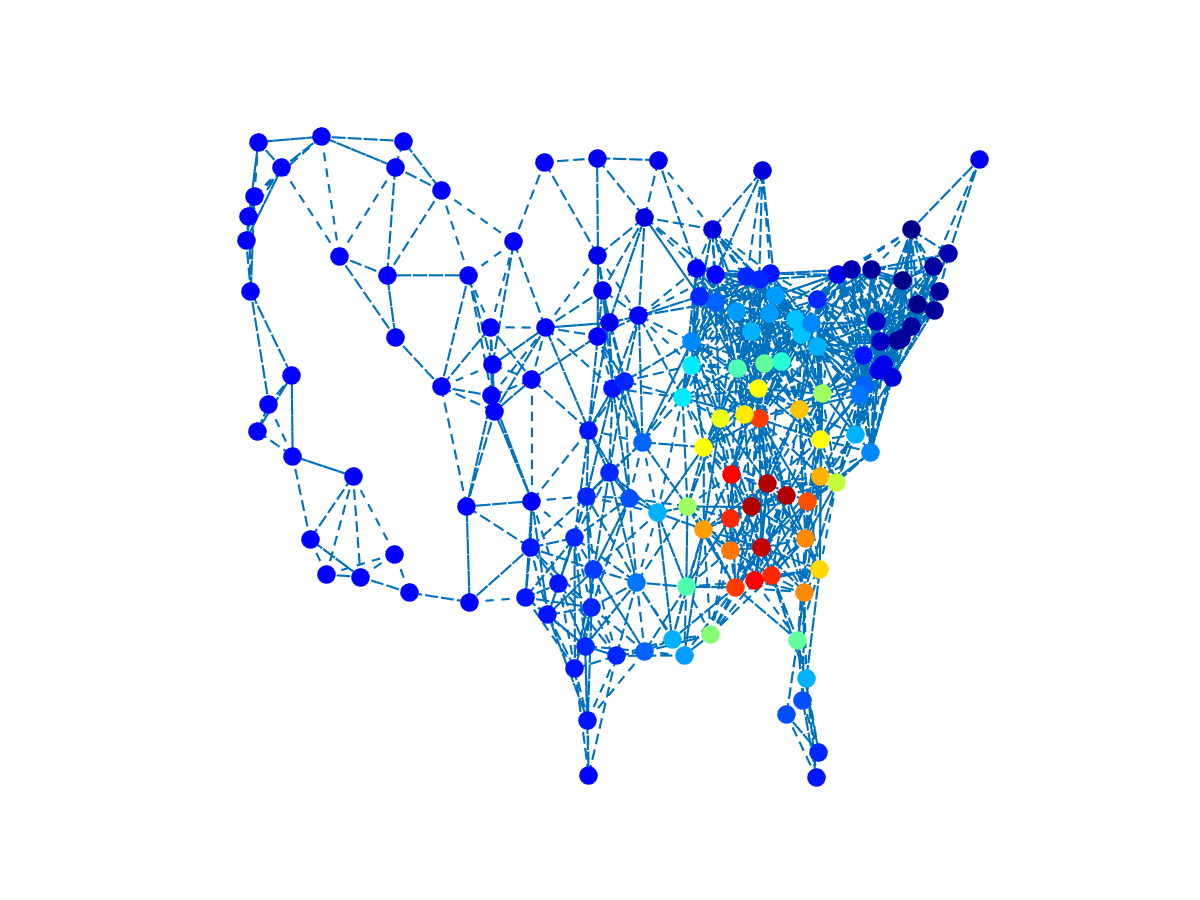}  
   
\\
 {\small (d) Original graph signal.} & {\small (e) Samples from  the optimal } & {\small (f) Recovery from the optimal}
 \\
 & {\small  sampling operator.} & {\small sampling operator.}
 \\  
 \includegraphics[width=0.6\columnwidth]{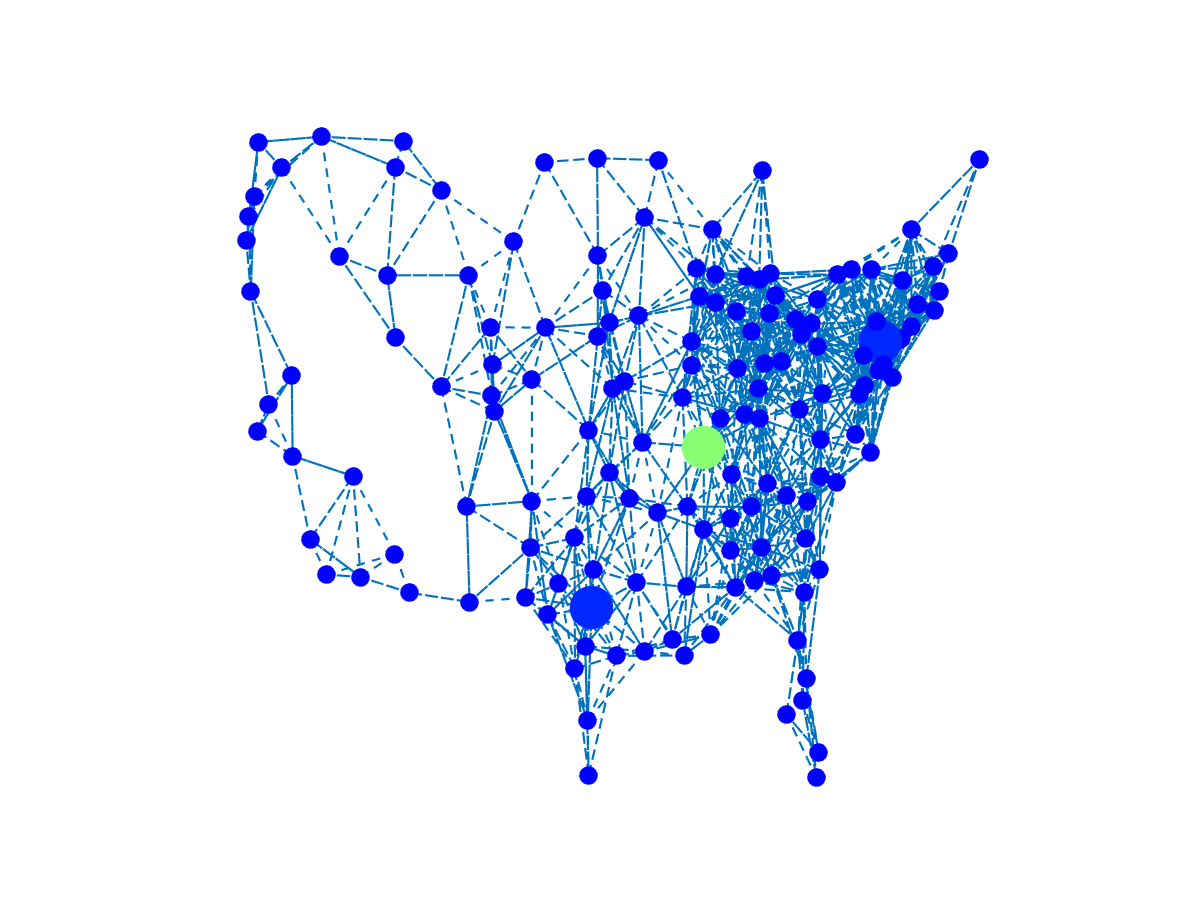}  &
\includegraphics[width=0.6\columnwidth]{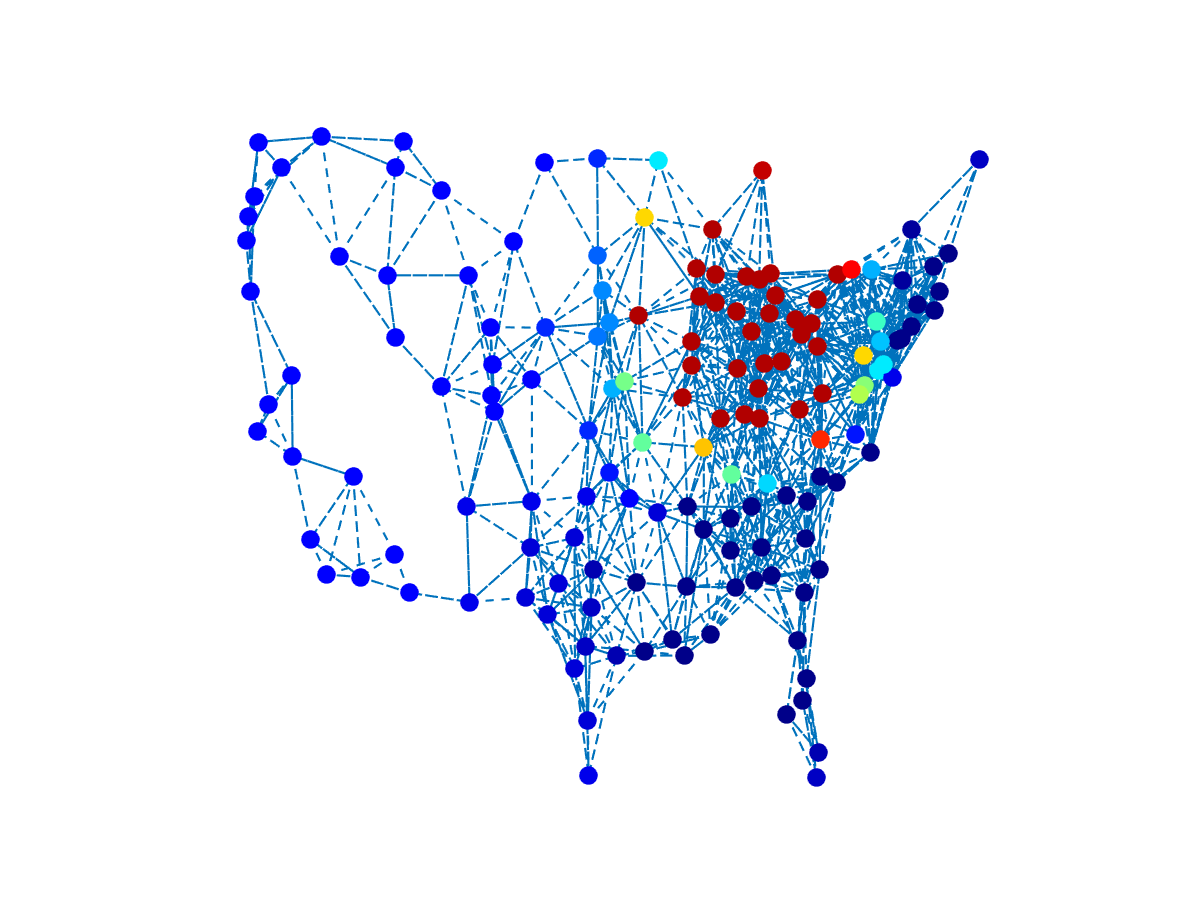}   &
\\
 {\small (g) Samples from a qualified} & {\small (h) Recovery from a qualified} & 
  \\
  {\small  sampling operator.} & {\small sampling operator.} &
\end{tabular}
  \end{center}
   \caption{\label{fig:geo_graph_signals} Graph signals on the sensor network. Colors indicate values of signal coefficients. Red represents high values, blue represents low values. Big nodes in (e) and (g) represent the sampled nodes in each case.}
\end{figure*}

\subsection{Random Sampling}
Previously, we showed that we need to design a  qualified sampling operator  to achieve perfect recovery. We now show that, for some specific graphs, random sampling leads to perfect recovery with high probabilities.

\subsubsection{Frames with Maximal Robustness to
Erasures}
A frame $\{ \f_0, \f_2, \cdots, \f_{N-1} \}$ is a generating system for $\C^K$ with $N \geq K$, when there exist two constants $a >0$ and $b < \infty$, such that for all $x \in \C^N$,
\begin{displaymath}
a \left\| \x \right\|^2 \leq   \sum_k | \f_k^T \x |^2   \leq   b \left\| \x \right\|^2.
\end{displaymath}

In finite dimensions, we represent the frame as an $N \times K$ matrix with rows $\f_k^T$. The frame is \emph{maximally robust to erasures} when every $K \times K$ submatrix (obtained by deleting $N-K$ rows) is invertible~\cite{PueschelK:05}. In~\cite{PueschelK:05}, the authors show that a polynomial transform matrix is one example of a frame maximally robust to erasures; in~\cite{SandryhailaCMKP:10}, the authors show that many lapped orthogonal transforms and lapped tight frame transforms are also maximally robust to erasures. It is clear that if the inverse graph Fourier transform matrix $\Vm$ as in~\eqref{eq:eigendecomposition} is maximally robust to erasures, any sampling operator that samples at least $K$  signal coefficients guarantees perfect recovery; in other words, when a graph Fourier transform matrix happens to be a polynomial transform matrix, sampling any $K$  signal coefficients leads to perfect recovery. 

For example, a circulant graph is a graph whose adjacency matrix is circulant~\cite{EkambaramFAR:13}. The circulant graph shift, $\Cm$, can be represented as a polynomial of the cyclic permutation matrix, $\Adj$. The graph Fourier transform of the cyclic permutation matrix is the discrete Fourier transform, which is again a polynomial transform matrix. As described above, we have
\begin{eqnarray}
\Cm & = & \sum_{i =0}^{L-1} h_i \Adj^i \ = \ \sum_{i =0}^{L-1} h_i  (\DFT^* \Lambda \DFT)^i 
\nonumber \\ \nonumber
& = & \DFT^* \left( \sum_{i =0}^{L-1} h_i  \Lambda^i  \right) \DFT,
\end{eqnarray}
where $L$ is the order of the polynomial, and $h_i$ is the coefficient corresponding to the $i$th order. Since the graph Fourier transform matrix of a circulant graph is the discrete Fourier transform matrix, we can perfectly recover a circulant-graph signal with bandwidth $K$ by sampling any $M \geq K$ signal coefficients as shown in Theorem~\ref{thm:DTPR}. In other words, perfect recovery is guaranteed when we randomly sample a sufficient  number of signal coefficients.

\subsubsection{Erd\H{o}s-R\'enyi Graph}
An Erd\H{o}s-R\'enyi graph is constructed by connecting nodes randomly, where each edge is included in the graph with probability $p$ independent of any other edge~\cite{Jackson:08,Newman:10}. We aim to show that by sampling $K$ signal cofficients randomly, the singular values of the corresponding $\Psi \Vm_{(K)}$  are bounded.

\begin{myLem}
  \label{lem:ERB}
 Let a graph shift $\Adj \in \R^{N \times N}$ represent an Erd\H{o}s-R\'enyi graph, where each pair of vertices is connected randomly and independently with probability $p = g(N)\log(N)/N$, and $g(\cdot)$ is some positive function. Let $\Vm$ be the eigenvector matrix of $\Adj$ with $\Vm^T\Vm = N\Id$, and let the number of sampled coefficients satisfy
\begin{displaymath}
 M \geq K  \frac{\log^{2.2} g(N) \log(N)}{p}  \max(C_1 \log K, C_2\log \frac{3}{\delta}),
\end{displaymath}
 for some positive constants $C_1, C_2$. Then, 
 \begin{equation}
 \label{eq:erbound}
 P \left( \left\|\frac{1}{M} (\Psi \Vm_{(K)})^T(\Psi \Vm_{(K)}) - \Id \right\|_2 \leq \frac{1}{2}  \right) \leq 1-\delta
 \end{equation}
is satisfied for all sampling operators $\Psi$ that sample $M$ signal coefficients.
\end{myLem}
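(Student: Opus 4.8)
The plan is to recognize the matrix $\frac1M (\Psi\Vm_{(K)})^T(\Psi\Vm_{(K)})$ as an empirical average of rank-one random matrices whose expectation is exactly the $K\times K$ identity, and then to control the deviation from the mean by a matrix Chernoff bound; the only ingredient specific to Erd\H{o}s--R\'enyi graphs is a delocalization estimate for the eigenvectors of $\Adj$.

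First I would fix notation: let $\w_j^T$ denote the $j$th row of $\Vm_{(K)}$, $j = 0,\dots,N-1$, so that if $\Psi$ selects the indices $\M_0,\dots,\M_{M-1}$,
\begin{equation}
\nonumber
\frac1M (\Psi\Vm_{(K)})^T(\Psi\Vm_{(K)}) \ = \ \frac1M \sum_{i=0}^{M-1} \w_{\M_i}\w_{\M_i}^T .
\end{equation}
Modeling the $M$ sampled indices as drawn uniformly from $\{0,\dots,N-1\}$, the summands are i.i.d. positive semidefinite matrices and, because the normalization $\Vm^T\Vm = N\Id$ forces $\Vm_{(K)}^T\Vm_{(K)} = N\Id$ on the first $K$ columns,
\begin{equation}
\nonumber
\mathbb{E}\big[\w_{\M_i}\w_{\M_i}^T\big] \ = \ \frac1N \sum_{j=0}^{N-1}\w_j\w_j^T \ = \ \frac1N \Vm_{(K)}^T\Vm_{(K)} \ = \ \Id .
\end{equation}
Thus \eqref{eq:erbound} is exactly the statement that this empirical average concentrates around its mean in spectral norm.

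The second step is a matrix Chernoff inequality: for i.i.d. positive semidefinite $Y_i$ with $\mathbb{E}[Y_i]=\Id$ and $\|Y_i\|_2\le L$ almost surely, one has $P(\|\frac1M\sum_i Y_i-\Id\|_2\ge \tfrac12)\le 2K\exp(-cM/L)$ for an absolute constant $c$, so that $M\gtrsim L(\log K+\log\tfrac1\delta)$ pushes the failure probability below $\delta$. Here $L$ may be taken to be $\max_j\|\w_j\|_2^2$, the largest squared row norm (the coherence) of $\Vm_{(K)}$; an analogous bound holds if the $M$ indices are sampled without replacement, so the conclusion applies to any $\Psi$ selecting $M$ indices. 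It therefore remains only to bound $L$.

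Bounding $L=\max_j\sum_{k=0}^{K-1}|\Vm_{j,k}|^2$ is the crux and is where the Erd\H{o}s--R\'enyi hypothesis is used. Since $\frac1N\operatorname{tr}(\Vm_{(K)}^T\Vm_{(K)})=K$, the average squared row norm is $K$, and $L$ is of that order precisely when the eigenvectors are flat; hence one needs an $\ell_\infty$ delocalization bound holding simultaneously for all eigenvectors of a \emph{sparse} random graph. For $p=g(N)\log(N)/N$ --- just above the connectivity threshold --- the delocalization estimates for $G(N,p)$ give, after a union bound over the $N$ eigenvectors and with probability at least $1-\delta/3$, $\max_{j,k}|\Vm_{j,k}|^2 \lesssim \frac{\log^{2.2}g(N)\,\log N}{p}$ under the normalization $\|\vv_k\|_2^2=N$; therefore $L\lesssim K\,\frac{\log^{2.2}g(N)\log N}{p}$. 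Substituting this $L$ into the matrix Chernoff estimate reproduces exactly the stated lower bound on $M$, with $\max(C_1\log K,\,C_2\log\tfrac3\delta)$ absorbing the $\log K$ and $\log\tfrac1\delta$ terms and the factor $3$ coming from distributing the total failure probability over the bad events (delocalization failure and concentration failure); intersecting those events yields \eqref{eq:erbound}.

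The main obstacle is this last step: producing a uniform $\ell_\infty$ bound on the eigenvectors of a sparse Erd\H{o}s--R\'enyi graph, valid all the way down to edge probability of order $\log(N)/N$ rather than only in the dense regime, is a genuinely nontrivial random-matrix fact and is the source of the precise polylogarithmic exponent appearing in the hypothesis. Everything else --- the reduction to a sum of rank-one matrices, the identification of the mean as $\Id$, and the matrix Chernoff step --- is routine.
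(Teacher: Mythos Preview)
Your approach is essentially the same as the paper's. The paper's proof is two lines: it cites a delocalization bound $\max_{i,j}|\Vm_{i,j}| = O\bigl(\sqrt{\log^{2.2}g(N)\log N / p}\bigr)$ for Erd\H{o}s--R\'enyi eigenvectors from Tran--Vu--Wang, and then substitutes this coherence bound into Theorem~1.2 of Cand\`es--Romberg, which is precisely the random-row-sampling near-isometry result you reconstruct via the matrix Chernoff argument. Your write-up simply unpacks what those two cited results do; the logical structure (delocalization $\Rightarrow$ small coherence $\Rightarrow$ concentration of the sampled Gram matrix) is identical.
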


\begin{proof}
For an Erd\H{o}s-R\'enyi graph, the eigenvector matrix $\Vm$ satisfies 
$$ 
\max_{i,j} |\Vm_{i,j}| = O \left(\sqrt{\log^{2.2} g(N) \log N /p} \right)
$$ for $p = g(N)\log(N)/N$~\cite{TranVW:13}. By substituting $\Vm$ into Theorem 1.2 in~\cite{CandesR:07}, we obtain~\eqref{eq:erbound}.
\end{proof}

\begin{myThm}
  \label{thm:ERframe}
  Let $\Adj, \Vm, \Psi$ be defined as in Lemma~\ref{lem:ERB}. With probability $(1- \delta)$, $\Psi \Vm_{(K)}$ is a frame in $\C^K$ with lower bound $M/2$ and upper bound $3M/2$.
\end{myThm}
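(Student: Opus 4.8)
The plan is to obtain Theorem~\ref{thm:ERframe} as an essentially immediate corollary of Lemma~\ref{lem:ERB}, by re-expressing the spectral-norm concentration bound in frame-theoretic language. First I would unwind the definition of a frame for $\C^K$: writing $\f_k^T$ for the $k$th row of the $M \times K$ matrix $\Psi\Vm_{(K)}$, one has for every $\x \in \C^K$
$$\sum_k |\f_k^T \x|^2 \ = \ \|\Psi\Vm_{(K)}\,\x\|_2^2 \ = \ \x^T \bigl( (\Psi\Vm_{(K)})^T(\Psi\Vm_{(K)}) \bigr) \x ,$$
so that the optimal frame bounds $a,b$ are exactly the smallest and largest eigenvalues of the Gram matrix $\G := (\Psi\Vm_{(K)})^T(\Psi\Vm_{(K)})$.

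Next I would invoke Lemma~\ref{lem:ERB}, which (under its hypothesis $M \geq K\,\tfrac{\log^{2.2} g(N)\log N}{p}\max(C_1\log K, C_2\log\tfrac{3}{\delta})$) guarantees that, with probability at least $1-\delta$, $\bigl\| \tfrac{1}{M}\G - \Id \bigr\|_2 \leq \tfrac12$. Since $\tfrac{1}{M}\G - \Id$ is symmetric, its spectral norm equals its largest eigenvalue in absolute value, so on this event every eigenvalue $\mu$ of $\tfrac{1}{M}\G$ satisfies $|\mu - 1|\leq \tfrac12$, i.e. $\mu \in [\tfrac12,\tfrac32]$. By the Rayleigh-quotient (Courant--Fischer) characterization this is equivalent to $\tfrac12\|\x\|_2^2 \leq \tfrac1M \x^T\G\x \leq \tfrac32\|\x\|_2^2$ for all $\x\in\C^K$, and combining with the display above yields
$$\frac{M}{2}\,\|\x\|_2^2 \ \leq\ \sum_k |\f_k^T \x|^2 \ \leq\ \frac{3M}{2}\,\|\x\|_2^2 ,$$
which is precisely the frame inequality with lower bound $a = M/2$ and upper bound $b = 3M/2$ on an event of probability $1-\delta$.

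There is no deep obstacle here; the result is a restatement of Lemma~\ref{lem:ERB}, and the only care needed is bookkeeping: (i) the frame lives in $\C^K$ rather than $\C^N$, which is consistent with $\Psi\Vm_{(K)}$ having $K$ columns; (ii) one must use that the spectral norm of a Hermitian matrix controls eigenvalue deviations in \emph{both} directions, so the one-sided operator-norm estimate of the lemma delivers two-sided frame bounds; and (iii) the probability $1-\delta$ is inherited verbatim, so no further union bound or concentration argument is required. The main thing I would double-check is that the hypothesis of Lemma~\ref{lem:ERB} indeed forces $M \geq K$, so that $\Psi\Vm_{(K)}$ is a genuine (generally redundant) frame for $\C^K$ rather than an incomplete system; this is clear since each of the factors multiplying $K$ in the bound on $M$ is at least of constant order.
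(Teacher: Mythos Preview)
Your proposal is correct and follows essentially the same route as the paper: invoke Lemma~\ref{lem:ERB} to obtain $\|\tfrac{1}{M}\G-\Id\|_2\le\tfrac12$ with probability $1-\delta$, then translate this into the two-sided quadratic-form bound $\tfrac{M}{2}\|\x\|_2^2\le \x^T\G\x\le\tfrac{3M}{2}\|\x\|_2^2$. The paper's proof is slightly terser (it writes the two-sided inequality directly without invoking eigenvalues or Rayleigh quotients explicitly), but the substance is identical; your extra remarks about $M\ge K$ and the Hermitian spectral-norm interpretation are fine elaborations.
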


\begin{proof}
Using Lemma~\ref{lem:ERB}, with  probability $(1- \delta)$, we have
   \begin{eqnarray}
   \nonumber
& \left\| \frac{1}{M} (\Psi \Vm_{(K)})^T(\Psi \Vm_{(K)}) - \Id \right\|_2  & \leq \frac{1}{2}.
   \end{eqnarray}
We thus obtain for all $\x \in \C^K$,
   \begin{eqnarray}
-\frac{1}{2} \x^T \x  \leq &  \x^T  \left( \frac{1}{M}  (\Psi \Vm_{(K)})^T(\Psi \Vm_{(K)}) - \Id \right) \x & \leq \frac{1}{2} \x^T \x,
\nonumber \\ \nonumber
\frac{M}{2} \x^T \x  \leq &  \x^T (\Psi \Vm_{(K)})^T(\Psi \Vm_{(K)})  \x & \leq \frac{3M}{2} \x^T \x.
   \end{eqnarray}
\end{proof}
 From Theorem~\ref{thm:ERframe}, we see that the singular values of  $\Psi \Vm_{(K)}$  are bounded with high probability. This shows that $\Psi \Vm_{(K)}$ has full rank with high probability; in other words, with high probability,  perfect recovery is achieved for Erd\H{o}s-R\'enyi  graph signals  when we randomly sample a sufficient number of signal coefficients.
 
\subsubsection{Simulations}
We verify Theorem~\ref{thm:ERframe} by checking the probability of satisfying the full-rank assumption by random sampling on Erd\H{o}s-R\'enyi graphs.  Once the full-rank assumption is satisfied, we can find a qualified sampling operator to achieve perfect recovery, thus, we call this probability the \emph{success rate of perfect recovery}.

We check the success rate of perfect recovery with various sizes and connection probabilities. We vary the size of an Erd\H{o}s-R\'enyi graph from 50 to 500 and the connection probabilities with an interval of 0.01 from 0 to 0.5. For each given size and connection probability, we generate 100 graphs randomly. Suppose that for each graph, the corresponding graph signal is of fixed bandwidth $K = 10$. Given a graph shift, we randomly sample 10 rows from the first 10 columns of the graph Fourier transform matrix and check whether the $10 \times 10$ matrix is of full rank. Based on Theorem~\ref{thm:GPR}, if the $10 \times 10$ matrix is of full rank, the perfect recovery is guaranteed. For each given graph shift, we run the random sampling for 100 graphs, and count the number of successes to obtain the success rate.

Figure~\ref{fig:success} shows success rates for sizes 50 and 500 averaged over 100 random tests. When we fix the graph size, in Erd\H{o}s-R\'enyi graphs, the success rate increases as the connection probability increases, that is, more connections lead to higher probability of getting a qualified sampling operator. When we compare the different sizes of the same type of graph, the success rate increases as the size increases, that is, larger graph sizes lead to higher probabilities of getting a qualified sampling operator. Overall, with a sufficient number of connections, the success rates are close to $100\%$. The simulation results suggest that the full-rank assumption is easier to satisfy when there exist more connections on graphs. The intuition is that in a larger graph with a higher connection probability, the difference between nodes is smaller, that is, each node has a similar connectivity property and there is no preference to sampling one rather than the other.

\begin{figure}[!htbp]
  \begin{center}
          \includegraphics[width=0.8\columnwidth]{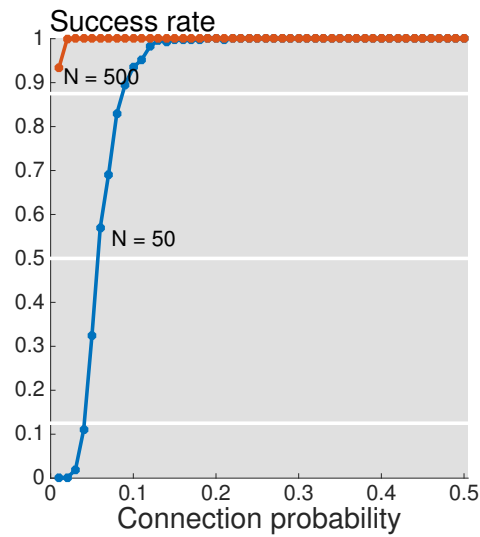} 
  \end{center}
  \caption{\label{fig:success}  Success rates for Erd\H{o}s-R\'enyi graphs. The blue curve represents an  Erd\H{o}s-R\'enyi graph of size 50 and the red curve represents an  Erd\H{o}s-R\'enyi graph of size 500.}
\end{figure}

\section{Relations \& Extensions}
\label{sec:discuss}
We now discuss three topics: relation to the sampling theory for finite discrete-time signals, relation to compressed sensing, and how to handle a full-band graph signal.

\subsection{Relation to Sampling Theory for Finite Discrete-Time Signals}
\label{sec:FDTS}
We call the graph that supports a finite discrete-time signal a~\emph{finite discrete-time graph}, which specifies the time-ordering from the past to future. The finite discrete-time graph can be represented by the cyclic permutation matrix~\cite{VetterliKG:12,SandryhailaM:131},
\begin{eqnarray}
\label{eq:CPM}
 \Adj & = &  \begin{bmatrix}
  0 & 0 &  \cdots & 1 \\
  1 & 0 &  \cdots & 0 \\
  \vdots &  \ddots & \ddots  & 0 \\
  0 & \cdots & 1  & 0
 \end{bmatrix} 
\ = \  \Vm \Lambda \Vm^{-1},
\end{eqnarray}
where the eigenvector matrix 
\begin{equation}
\label{eq:eigenvector}
\Vm 	= \begin{bmatrix} \vv_0 & \vv_1 & \cdots  & \vv_{N-1} \end{bmatrix} = \begin{bmatrix}  \frac{1}{\sqrt{N}} (W^{jk})^* \end{bmatrix}_{j, k = 0, \cdots N-1}
\end{equation}
is the Hermitian transpose of the $N$-point discrete Fourier transform matrix, $\Vm = \DFT^*$, $\Vm^{-1}$ is the $N$-point discrete Fourier transform matrix, $\Vm^{-1} = \DFT$, and the eigenvalue matrix is
\begin{equation}
\label{eq:eigenvalue}
 \Lambda  = {\rm diag} \begin{bmatrix}
		 W^{0} &   W^{1} & \cdots &  W^{N-1}
 	\end{bmatrix},
\end{equation}
where $W = e^{-2\pi j / N}$. We see that Definitions~\ref{df:GBL},~\ref{df:GBLS} and Theorem~\ref{thm:GPR} are immediately applicable to finite
discrete-time signals and are consistent with sampling of such signals~\cite{VetterliKG:12}.
\begin{defn}
  \label{df:DBL}
  A discrete-time signal is called~\emph{bandlimited} when there exists $K \in
  \{0, 1, \cdots, N-1\}$ such that its discrete Fourier transform
  $\widehat{\x}$ satisfies
  \begin{displaymath}
  \widehat{x}_k  =  0 \quad {\rm for~all~}  \quad k \geq K.
\end{displaymath}
The smallest such $K$ is called the~\emph{bandwidth} of $\x$. A discrete-time
signal that is not bandlimited is called a \emph{full-band discrete-time signal}.
\end{defn}
\begin{defn}
  \label{df:DBLS}
  The set of discrete-time signals in $\C^N$ with bandwidth of at most $K$ is
  a closed subspace denoted $\BL_K(\DFT)$, with $\DFT$ the discrete Fourier transform matrix.
\end{defn}
With this definition of the discrete Fourier
transform matrix, the highest frequency is in the middle of the
spectrum (although this is just a matter of ordering). From
Definitions~\ref{df:DBL} and~\ref{df:DBLS}, we can  permute the
rows in the discrete Fourier transform matrix to choose any frequency
band. Since the discrete Fourier transform matrix is a Vandermonde matrix, any $K$ rows of $\DFT^*_{(K)}$ are independent~\cite{HornJ:85,VetterliKG:12}; in other words, rank($\Psi \DFT^*_{(K)}) =K$ always holds when $M \geq K$. We apply now Theorem~\ref{thm:GPR} to obtain the following result.

\begin{myCorollary}
  \label{thm:DTPR}
  Let $\Psi$ satisfy that the sampling number is no smaller than the bandwidth, $M \geq K$.
For all  $\x \in \BL_K(\DFT)$, perfect recovery,  $\x = \Phi \Psi \x$, is achieved by choosing
\begin{displaymath}
   \Phi = \DFT^*_{(K)} \Um,
\end{displaymath}   
    with $\Um \Psi
  \DFT^*_{(K)}$  a $K \times K$ identity matrix, and  $\DFT^*_{(K)}$ denotes the first $K$ columns of $\DFT^*$.
\end{myCorollary}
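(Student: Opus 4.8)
The plan is to obtain Corollary~\ref{thm:DTPR} as the specialization of Theorem~\ref{thm:GPR} to the finite discrete-time graph~\eqref{eq:CPM}, whose inverse graph Fourier transform matrix is $\Vm = \DFT^*$ from~\eqref{eq:eigenvector}. Everything in the conclusion---the interpolation operator $\Phi = \DFT^*_{(K)}\Um$ and the requirement that $\Um\Psi\DFT^*_{(K)} = \Id$---then falls out of Theorem~\ref{thm:GPR} by simply writing $\DFT^*$ in place of $\Vm$. So the only thing that actually has to be proved is that every $\Psi$ sampling $M \geq K$ coefficients satisfies the rank hypothesis~\eqref{eq:Assumption} with $\Vm = \DFT^*$, i.e.\ ${\rm rank}(\Psi\DFT^*_{(K)}) = K$.

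First I would record the Vandermonde structure of $\DFT^*_{(K)}$: by~\eqref{eq:eigenvector} its $(j,k)$ entry is $\frac{1}{\sqrt N}(W^{jk})^* = \frac{1}{\sqrt N}W^{-jk}$, so its $j$th row is $\frac{1}{\sqrt N}\bigl(1,\,W^{-j},\,W^{-2j},\,\ldots,\,W^{-(K-1)j}\bigr)$, a geometric progression with ratio $W^{-j}$, and the $N$ ratios $W^{-0},W^{-1},\ldots,W^{-(N-1)}$ are the distinct $N$th roots of unity. Next, for a sampling operator $\Psi$ picking distinct indices $\M_0,\ldots,\M_{M-1}$, I would choose any $K$ of them; the associated $K\times K$ submatrix of $\DFT^*_{(K)}$ equals $\frac{1}{\sqrt N}$ times a Vandermonde matrix in the $K$ pairwise-distinct nodes $W^{-\M_0},\ldots,W^{-\M_{K-1}}$, whose determinant is a nonzero scalar multiple of $\prod_{a<b}\bigl(W^{-\M_b}-W^{-\M_a}\bigr)\neq 0$. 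Hence $\Psi\DFT^*_{(K)}$ contains an invertible $K\times K$ block and has rank $K$, which is precisely~\eqref{eq:Assumption}.

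Finally I would invoke Theorem~\ref{thm:GPR}: with $\Vm = \DFT^*$ it gives $\x = \Phi\Psi\x$ for every $\x\in\BL_K(\DFT)$, choosing $\Phi = \DFT^*_{(K)}\Um$ with $\Um\Psi\DFT^*_{(K)} = \Id$---concretely $\Um = (\Psi\DFT^*_{(K)})^{-1}$ when $M = K$, and a left inverse of $\Psi\DFT^*_{(K)}$ when $M > K$ (equivalently, one may first discard $M-K$ surplus samples, which by the remark following Theorem~\ref{thm:GPR} keeps the operator qualified). I expect no real obstacle here: the single nontrivial ingredient is the classical non-vanishing of the Vandermonde determinant for distinct nodes, and it is worth stating explicitly that the resulting conclusion---any $K$ samples determine a $K$-bandlimited discrete-time signal---matches ordinary DFT-domain sampling, confirming consistency with~\cite{VetterliKG:12}.
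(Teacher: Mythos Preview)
Your proposal is correct and follows essentially the same approach as the paper: establish that ${\rm rank}(\Psi\DFT^*_{(K)})=K$ via the Vandermonde structure of $\DFT^*_{(K)}$ (the paper states this in one line with citations, while you spell out the determinant argument), and then invoke Theorem~\ref{thm:GPR} with $\Vm=\DFT^*$. The only minor remark is that your assumption of distinct sampling indices is implicit rather than stated in the paper's setup, but this is clearly the intended reading in both places.
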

From Corollary~\ref{thm:DTPR}, we can perfectly recover a discrete-time signal when it is bandlimited. 

Similarly to Theorem~\ref{thm:sg}, we can show that a new graph shift can be constructed from the finite discrete-time graph. Multiple sampling mechanisms can be used to sample a new graph shift; an intuitive one is as follows: let $\x \in \C^N$ be a finite discrete-time signal, where  $N$ is even. Reorder the frequencies in~\eqref{eq:eigenvalue}, by putting the frequencies with even indices first,
\begin{displaymath}
 \widetilde{\Lambda}  = {\rm diag} \begin{bmatrix}
		\lambda_0 &  \lambda_2 & \cdots & \lambda_{N-2} & \lambda_1 &  \lambda_3 & \cdots & \lambda_{N-1}
 	\end{bmatrix}.
\end{displaymath}
Similarly, reorder the columns of $\Vm$ in~\eqref{eq:eigenvector} by putting the columns with even indices first
  \begin{displaymath}
\widetilde{\Vm} 	 = \begin{bmatrix}  \vv_0 & \vv_2 & \cdots & \vv_{N-2}  &   \vv_1 & \vv_3 & \cdots & \vv_{N-1} \end{bmatrix}.
\end{displaymath}
One can check that $\widetilde{\Vm} \widetilde{\Lambda} \widetilde{\Vm} ^{-1} $ is still the same cyclic permutation matrix. Suppose we want to preserve the first $N/2$ frequency components in  $\widetilde{\Lambda}$; the sampled frequencies are then
\begin{displaymath}
 \widetilde{\Lambda}_{(N/2)}  = {\rm diag} \begin{bmatrix}
		\lambda_0 &  \lambda_2 & \cdots & \lambda_{N-2}
 	\end{bmatrix}.
\end{displaymath}
Let a sampling operator $\Psi$ choose the first $N/2$ rows in $\widetilde{\Vm}_{(N/2)}$,  
\begin{displaymath}
	\Psi \widetilde{\Vm}_{(N/2)}	 = 
\begin{bmatrix}
 \frac{1}{\sqrt{N}} (W^{2jk})^* \end{bmatrix}_{j, k = 0, \cdots N/2-1},
\end{displaymath}
which is the Hermitian transpose of the discrete Fourier transform of size $N/2$ and satisfies rank$(\Psi \widetilde{\Vm}_{(N/2)})=N/2$ in Theorem~\ref{thm:sg}.  The sampled graph Fourier transform matrix $\Um = (\Psi \widetilde{\Vm}_{(N/2)}))^{-1}$ is the discrete Fourier transform of size $N/2$. The sampled graph shift is then constructed as
\begin{displaymath}
\Adj_\M = \Um^{-1} \widetilde{\Lambda}_{(N/2)} \Um,
\end{displaymath}
which is exactly the $N/2 \times N/2$ cyclic permutation matrix.  Hence, we have shown that  by choosing an appropriate sampling mechanism, a smaller finite discrete-time graph is obtained from a larger finite discrete-time graph by using Theorem~\ref{thm:sg}. We note that  using a different ordering or  sampling operator would result in a graph shift that can be different and non-intuitive. This is simply a matter of choosing different frequency components.

\subsection{Relation to Compressed Sensing}
Compressed sensing is a sampling framework to recover sparse signals with a few measurements~\cite{Donoho:06}. The theory asserts that a few samples guarantee the recovery of the original signals when the signals and the sampling approaches are well-defined in some theoretical aspects. To be more specific, given the sampling operator $\Psi \in \mathbb{R}^{M \times N}, M \ll N $, and the sampled signal $\x_\M \ = \ \Psi \x$, a sparse signal $\x \in \mathbb{R}^N$ is recovered by solving  
\begin{eqnarray}
\min_\x {~ \left\| \x \right\|_{0}}, {\rm~~subject~to~} \x_\M \ = \ \Psi \x. \end{eqnarray}
Since the $l_0$ norm is not convex, the optimization is a non-deterministic polynomial-time hard problem. To obtain a computationally efficient algorithm,  the $l_1$-norm based algorithm, known as basis pursuit or basis pursuit with denoising, recovers the sparse signal with small approximation error~\cite{ChenDS:01}.

In the standard compressed sensing theory, signals have to be sparse or approximately sparse to gurantee accurate recovery properties.  In~\cite{CandesENR:10}, the authors proposed a general way to perform compressed sensing with non-sparse signals using dictionaries. Specifically, a general signal $\x \in \mathbb{R}^N$ is recovered by 
\begin{eqnarray}
\label{eq:gcs}
 \min_\x {~\left\|  \D \x  \right|_{0}}, {\rm~~subject~to~} \x_\M = \Psi \x,
\end{eqnarray}
where $\D$ is a dictionary designed to make $\D \x$ sparse. When specifying $\x$ to be a graph signal, and $\D$ to be the appropriate graph Fourier transform of the graph on which the signal resides, $\D\x$ represents the frequency content of $\x$, which is sparse when $\x$ is of limited bandwidth. Equation~\eqref{eq:gcs} recovers a bandlimited graph signal from a few sampled signal coefficients via an optimization approach. The proposed sampling theory deals with the cases where the frequencies corresponding to non-zero elements are known, and can be reordered to form a bandlimited graph signal. Compressed sensing deals with the cases where the frequencies corresponding to non-zero elements are unknown, which is a more general and harder problem. If we have access to the position of the non-zero elements, the proposed sampling theory uses the smallest number of samples to achieve perfect recovery.

\subsection{Relation to Signal Recovery on Graphs}
Signal recovery on graphs attempts to recover graph signals that are assumed to be smooth with respect to underlying graphs, from noisy, missing, or corrupted samples~\cite{ChenSMK:14}. Previous works studied signal recovery on graphs from different perspectives. For example, in~\cite{BelkinN:04}, the authors considered that a graph is a discrete representation of a manifold, and aimed at recovering graph signals by regularizing the smoothness functional; in~\cite{GradyS:03}, the authors aimed at recovering graph signals by regularizing the combinatorial Dirichlet; in~\cite{ChenSMK:14}, the authors aimed at recovering graph signals by regularizing the graph total variation;  in~\cite{TremblayBF:14}, the authors aimed at recovering graph signals by finding the empirical modes;
in~\cite{ThanouSF:14}, the authors aimed at recovering graph signals by training a graph-based dictionary. These works focused on minimizing the empirical recovery error, and dealt with general graph signal models. It is thus hard to show when and how the missing signal coefficients can be exactly recovered. 

Similarly to signal recovery on graphs, sampling theory on graphs also attempts to recover graph signals from incomplete samples. The main difference is that sampling theory on graphs focuses on a subset of smooth graph signals, that is, bandlimited graph signals, and theoretically analyzes when and how the missing signal coefficients can be exactly recovered. The authors in~\cite{Pesenson:08,Pesenson:09, AnisGO:14} considered a similar problem to ours, that is, recovery of bandlimited graph signals by sampling a few signal coefficients in the vertex domain. The main differences are as follows: (1) we focus on the graph adjacency matrix, not the graph Laplacian matrix; (2) when defining the bandwidth, we focus on the number of frequencies, not the values of frequencies. Some recent extensions of sampling theory on graphs include~\cite{FuharP:13, ChenVSK:15, GaddeO:15}.

\subsection{Graph Downsampling \& Graph Filter Banks}
In classical signal processing, sampling refers to sampling a continuous function and downsampling refers to sampling a sequence. Both concepts use fewer samples to represent the overall shape of the original signal. Since a graph signal is discrete in nature, sampling and downsampling are the same. Previous works implemented graph downsampling  via graph coloring~\cite{NarangO:12} or minimum spanning tree~\cite{NguyenD:15}.

The proposed sampling theory provides a family of qualified sampling operators~\eqref{eq:Assumption} with an optimal sampling operator as in~\eqref{eq:optimalset}. To downsample a graph by 2,  one can set the bandwidth to a half of the number of nodes, that is, $K = N/2$, and use~\eqref{eq:optimalset} to obtain an optimal sampling operator. An example for the finite discrete-time signals was shown in Section~\ref{sec:FDTS}.

As shown in Theorem~\ref{thm:GPR}, perfect recovery is achieved when graph signals are bandlimited. To handle full-band graph signals, we propose an approach based on graph filter banks, where each channel does not need to recover perfectly but in conjunction they do.

\begin{figure*}[t]
  \begin{center}
     \includegraphics[width= 1.3\columnwidth]{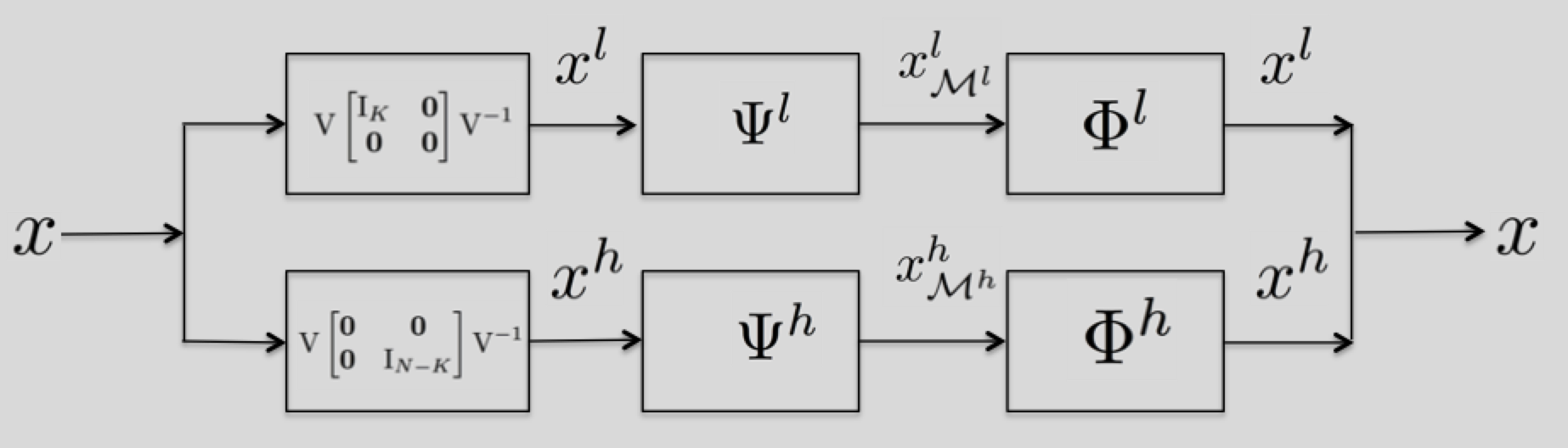}
  \end{center}
  \caption{\label{fig:filterbanks}  Graph filter bank that splits the graph signal into two bandlimited graph signals. In each channel, we perform sampling and interpolation, following Theorem~\ref{thm:GPR}. Finally, we add the results from both channels to obtain the original full-band graph signal.}
\end{figure*}

Let $\x$ be a full-band graph signal, which, without loss of generality, we can express without loss of generality as the addition of two bandlimited signals supported on the same graph, that is, $\x = \x^l + \x^h$, where
\begin{displaymath}
\x^l  = \Pj^l \x,~~
\x^h  = \Pj^h  \x,
\end{displaymath}
and 
\begin{displaymath}
\Pj^l \ = \ \Vm \begin{bmatrix}
\Id_K & \bold{0} \\ \bold{0}   & \bold{0}
\end{bmatrix} \Vm^{-1},
~~\Pj^h \ = \ = \Vm \begin{bmatrix}
\bold{0}  & \bold{0} \\ \bold{0}   & \Id_{N-K} 
\end{bmatrix} \Vm^{-1}.
\end{displaymath}
We see that $\x^l$ contains the first $K$ frequencies, $\x^h$ contains the other $N-K$ frequencies, and each is bandlimited. We do sampling and interpolation for $\x^l$ and $\x^h$ in two channels, respectively. Take the first channel as an example. Following Theorems~\ref{thm:GPR} and~\ref{thm:sg}, we use a qualified sampling operator $\Psi^l$ to sample $\x^l$, and obtain the sampled signal coefficients as $\x^l_{\M^l} = \Psi^l \x^l$, with the corresponding graph as $\Adj_{\M^l}$. We can recover $\x^l$ by using interpolation operator $\Phi^l$ as $\x^l = \Phi^l \x^l_{\M^l}$. Finally, we add the results from both channels to obtain the original full-band graph signal (also illustrated in Figure~\ref{fig:filterbanks}). The main benefit of working with a graph filter bank is that, instead of dealing with a long graph signal with a large graph, we are allowed to focus on the frequency bands of interests and deal with a shorter graph signal with a small graph in each channel.

We do not restrict the samples from two bands,  $\x^l_{\M^l}$ and $\x^h_{\M^h}$ the same size because we can adaptively design the sampling and interpolation operators based on the their own sizes. This is similar to the filter banks in the classical literature where the spectrum is not evenly partitioned between the channels~\cite{Vetterli:87}. We see that the above idea can easily be generalized to multiple channels by splitting the original graphs signal into multiple bandlimited graph signals; instead of dealing with a huge graph, we work with multiple small graphs, which makes computation easier.

\mypar{Simulations}
We now show an example where we analyze graph signals by using the proposed graph filter banks. Similarly to Section~\ref{sec:sensor_optimal}, we consider that the weather stations across the U.S. form a graph and temperature values measured at each weather station in one day form a graph signal. Suppose that a high-frequency component represents some pattern of weather change; we want to detect this pattern given the temperature values. We can decompose a graph signal of temperature values into a low-frequency channel (largest 15 frequencies) and a high-frequency channel (smallest 5 frequencies). In each channel, we sample the bandlimited graph signal to obtain a sparse and loseless representation. Figure~\ref{fig:geo_GFB} shows a comparison between temperature values on January 1st, 2013 and May 1st, 2013. We intentionally added some high-frequency component to the temperature values on January 1st, 2013. We see that the high-frequency channel in Figure~\ref{fig:geo_GFB} (a) detects the change, while  the high-frequency channel in Figure~\ref{fig:geo_GFB} (b) does not.

Directly using the graph frequency components can also detect the high-frequency components, but the graph frequency components cannot be easily visualized. Since the graph structure and the decomposed channels are fixed, the optimal sampling operator and corresponding interpolation operator in each channel can be designed in advance, which means that we just need to look at the sampled coefficients of a fixed set of nodes to check whether a channel is activated. The graph filter bank is thus fast and visualization friendly.

\begin{figure*}[htb]
  \begin{center}
    \begin{tabular}{cc}
\includegraphics[width=1\columnwidth]{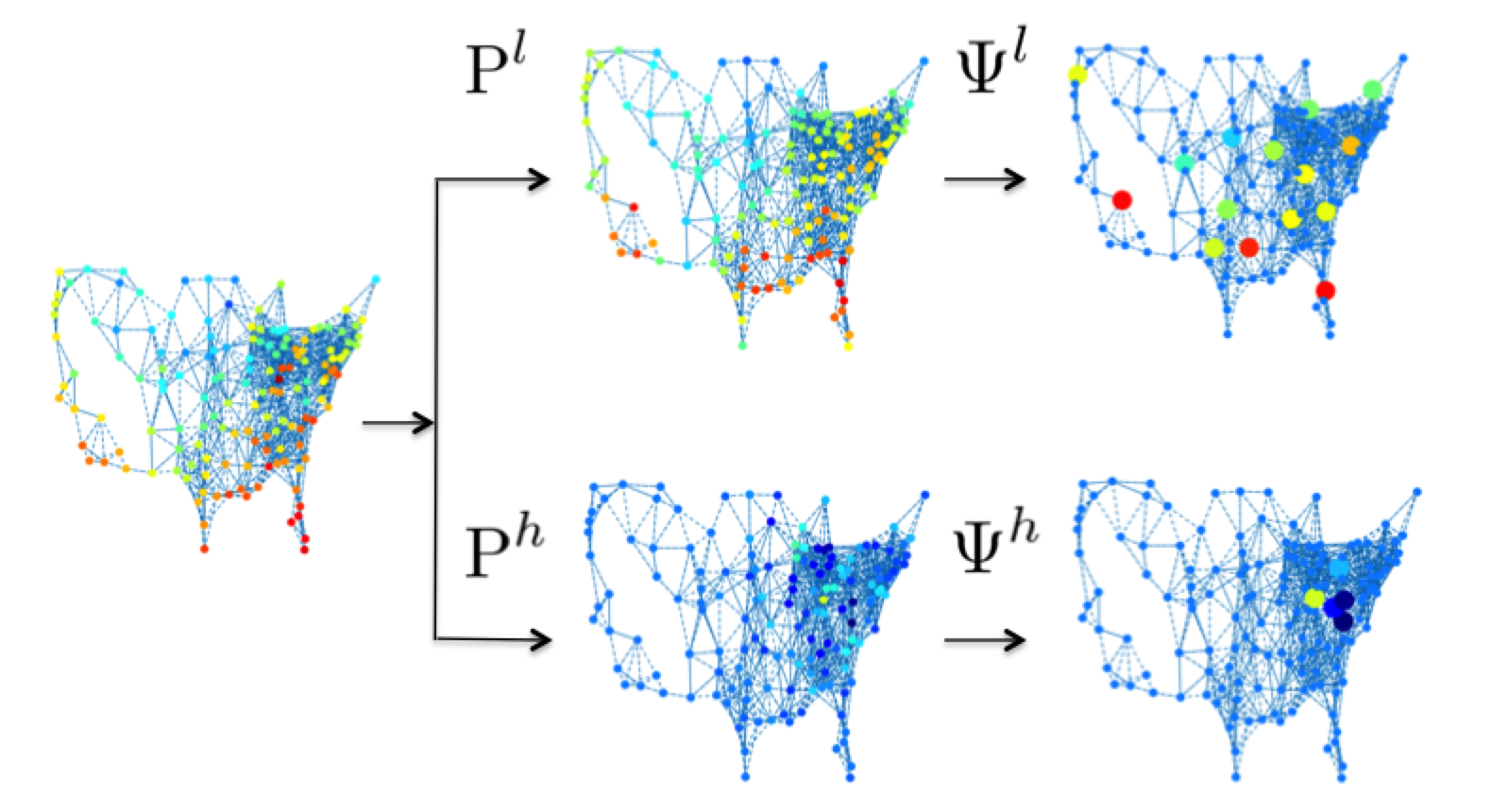} 
&
 \includegraphics[width=1\columnwidth]{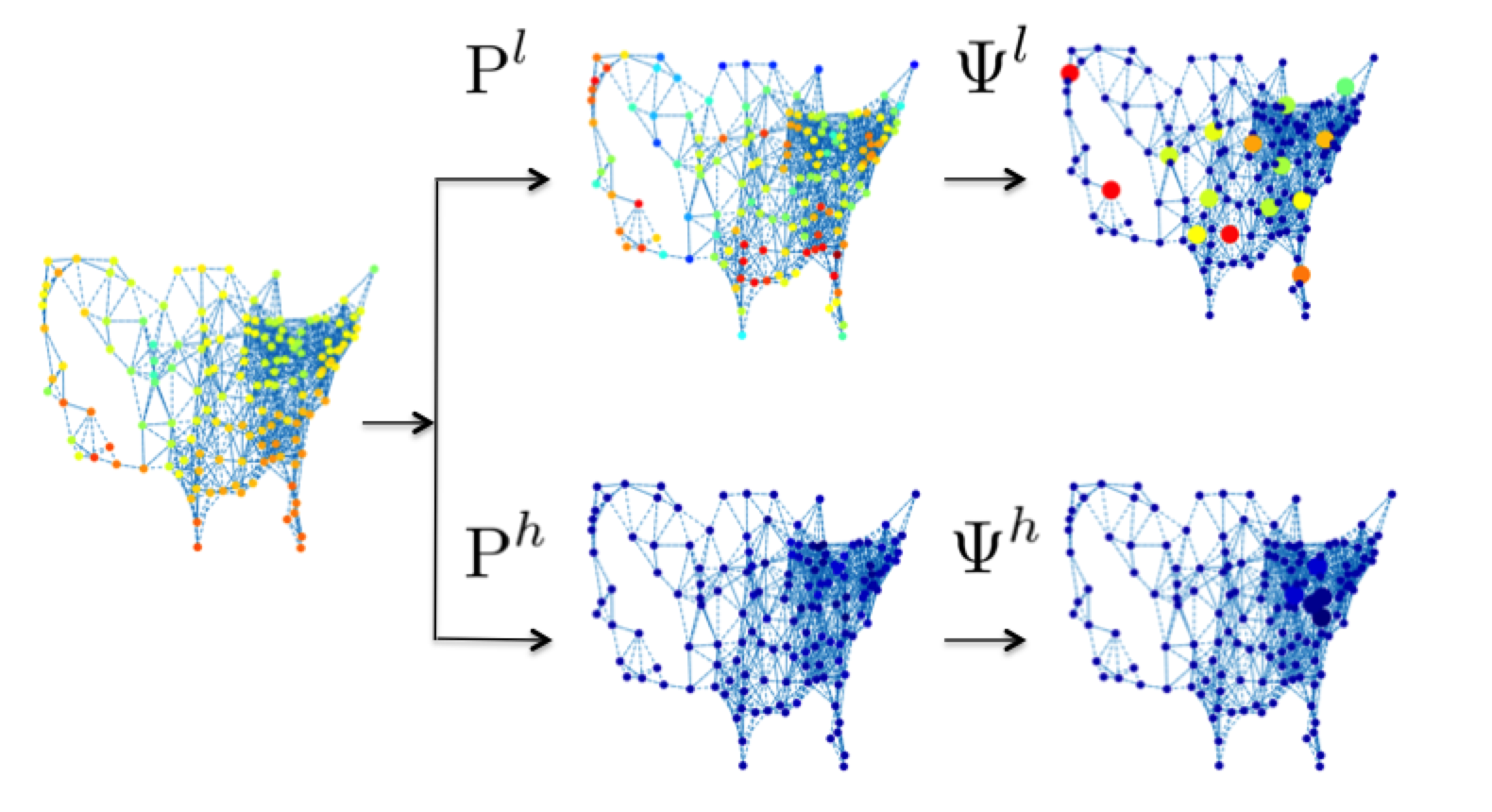} 
\\
 {\small (a) Temperature on January 1st, 2013 with high-frequency pattern.}  &  {\small (b) Temperature on May 1st, 2013.}
 \\
\end{tabular}
  \end{center}
   \caption{\label{fig:geo_GFB} Graph filter banks analysis. }
\end{figure*}




\section{Applications}
\label{sec:apps}
The proposed sampling theory on graphs can be applied to semi-supervised learning, whose goal is to classify data with a few labeled and a large number of unlabeled samples~\cite{Zhu:05}. One approach is based on graphs, where the labels are often assumed to be smooth on graphs. From a perspective of signal processing, smoothness can be expressed as lowpass nature of the signal. Recovering smooth labels on graphs is then equivalent to interpolating a low-pass graph signal. We now look at two examples, including classification of online blogs and handwritten digits.

\subsubsection{Sampling Online Blogs}
\label{sec:blogs}
We first aim to  investigate the success rate of perfect recovery by using random sampling, and then classify the labels of the online blogs. We consider a dataset of $N=1224$ online political blogs
as either conservative or liberal~\cite{AdamicG:05}. We represent
conservative labels as $+1$ and liberal ones as $-1$.  The blogs are
represented by a graph in which nodes represent blogs, and directed
graph edges correspond to hyperlink references between blogs. The graph signal here is the label assigned to the blogs, called the labeling signal.  We use the spectral decomposition in~\eqref{eq:eigendecomposition} for this online-blog graph to get the graph frequencies in a  descending order and the corresponding graph Fourier transform matrix. The labeling signal is a full-band signal, but approximately bandlimited.

To investigate the success rate of perfect recovery by using random sampling, we vary the bandwidth $K$ of the labeling signal with an interval of 1 from 1 to 20, randomly sample $K$ rows from the first $K$ columns of the graph Fourier transform matrix, and check whether the $K \times K$ matrix has full rank. For each bandwidth, we  randomly sample 10,000 times, and count the number of successes to obtain the success rate. Figure~\ref{fig:blog} (a) shows the resulting success rate. We see that the success rate decreases as we increase the bandwidth; it is above $90\%$, when the bandwidth is no greater than $20$. It means that we can achieve perfect recovery by using random sampling with a fairly high probability. As the bandwidth increases, even if we get an equal number of samples, the success rate still decreases, because when we take more samples, it is easier to get a sample correlated with the previous samples.

\begin{figure*}[!htbp]
  \begin{center}
  \begin{tabular}{cc}  
      \includegraphics[width=0.6\columnwidth]{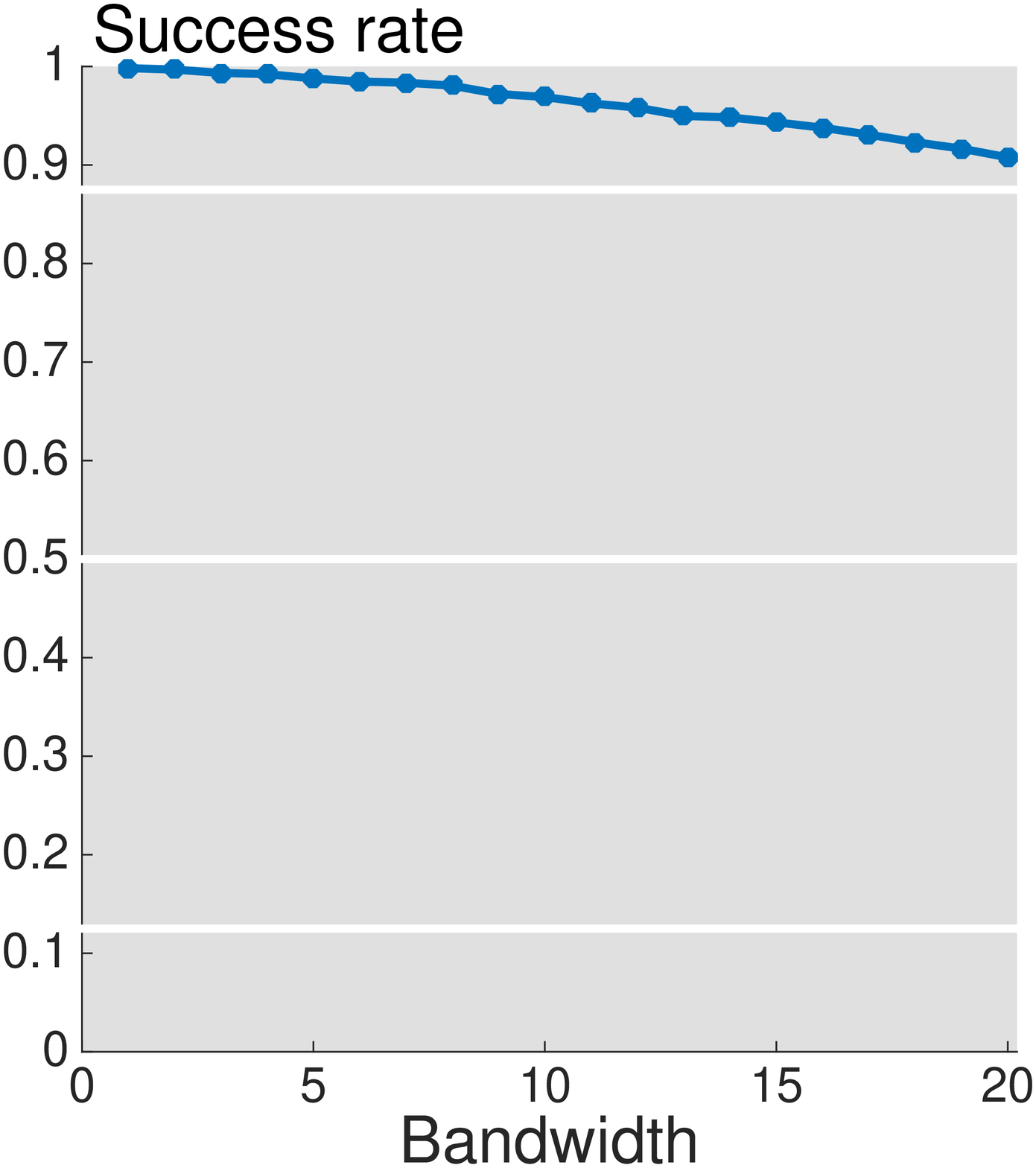} 
      &
       \includegraphics[width=0.6\columnwidth]{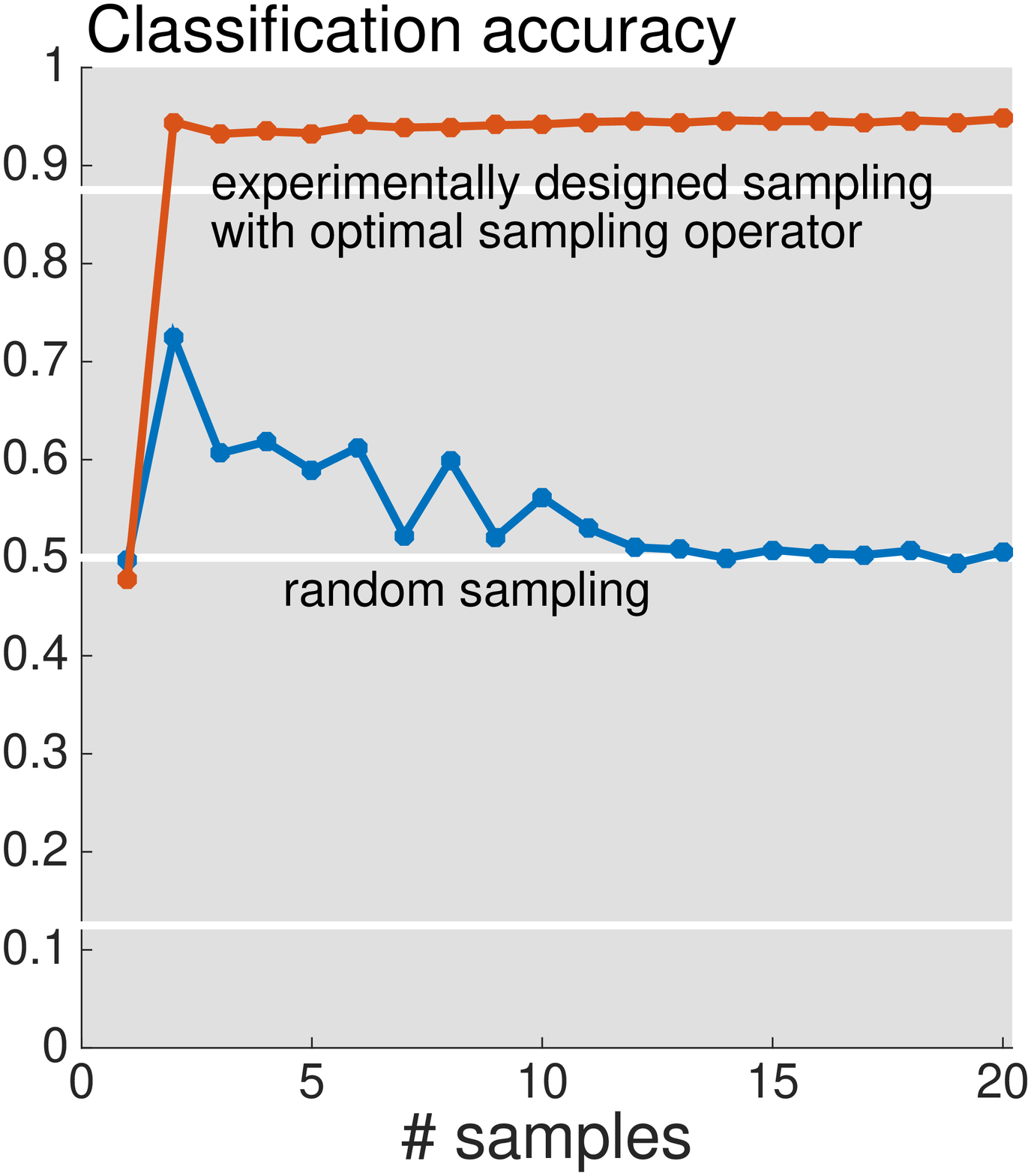}  \\
        {\small (a) Success rate as a function of bandwidth.} & {\small (b) Classification accuracy as a function of the number of samples.}
      \end{tabular}
  \end{center}
  \caption{\label{fig:blog}  Classification for online blogs.  When increasing the bandwidth, it is harder to find a qualified sampling operator. The experimentally designed sampling with the optimal sampling operator outperforms random sampling.}
\end{figure*}

Since a qualified sampling operator is independent of graph signals, we precompute the qualified sampling operator for the online-blog graph, as discussed in Section~\ref{sec:ggs}. When the labeling signal is bandlimited, we can sample $M$ labels from it by using a qualified sampling operator, and recover the labeling signal by using the corresponding interpolation operator. In other words, we can design a set of blogs to label before querying any label.  Most of the time, however, the labeling signal is not bandlimited, and it is not possible to achieve perfect recovery. Since we only care about the sign of the labels, we use only the low frequency content to approximate the labeling signal; after that, we set a threshold to assign labels. To minimize the influence from the high-frequency content, we can use the optimal sampling operator in Algorithm~\ref{alg:optimalset}.

We solve the following optimization problem to recover the low frequency content,
\begin{equation}
\label{eq:classification}
\widehat{\x}_{(K)}^{opt} \ = \ \arg\min_{\widehat{\x}_{(K)} \in \R^K} ~\left\| {\rm sgn}(\Psi \Vm_{(K)} \widehat{\x}_{(K)}) - \x_\M \right\|_2^2,
\end{equation}
where $\Psi \in \R^{M \times N}$ is a sampling operator, $\x_\M \in \R^M$ is a vector of the sampled labels whose element is either $+1$ or $-1$, and ${\rm sgn}(\cdot)$ sets all positive values to $+1$ and all negative values to $-1$.  Note that without ${\rm sgn}(*)$, the solution of~\eqref{eq:classification} is $(\Psi \Vm_{(K)})^{-1} \x_\M$ in Theorem~\ref{thm:GPR}, which perfectly recovers the labeling signal when it is bandlimited. When the labeling signal is not bandlimited, the solution of~\eqref{eq:classification} approximates the low-frequency content. The $\ell_2$ norm~\eqref{eq:classification} can be relaxed by the logit function and solved by logistic regression~\cite{Bishop:06}. The recovered labels are then $\x^{opt} ={\rm sgn} (\Vm_{(K)}\widehat{\x}_{(K)}^{opt})$. 

Figure~\ref{fig:blog} (b) compares the classification accuracies between optimal sampling and random sampling by varying the sample size  with an interval of 1 from 1 to 20. We see that the optimal sampling significantly outperforms random sampling, and random sampling does not improve with more samples, because the interpolation operator~\eqref{eq:classification} assumes that the sampling operator is qualified, which is not always true for random sampling as shown in Figure~\ref{fig:blog} (a). Note that classification accuracy for the optimal sampling is as high as $94.44\%$ by only sampling two blogs, and the classification accuracy gets slightly better as we increases the number of samples.  Compared with the previous results~\cite{ChenSLWMRBGK:14}, to achieve around $94\%$ classification accuracy,  
\begin{itemize}
\item harmonic functions on graph samples 120 blogs;
\item graph Laplacian regularization samples 120 blogs;
\item graph total variation regularization samples 10 blogs; and 
\item the proposed optimal sampling operator~\eqref{eq:optimalset} samples 2 blogs.
\end{itemize}

The improvement comes from the fact that, instead of sampling randomly as in~\cite{ChenSLWMRBGK:14}, we use the optimal sampling operator to choose samples based on the graph structure.

 \begin{figure*}
  \begin{center}
    \begin{tabular}{cc}  
          \includegraphics[width=1\columnwidth]{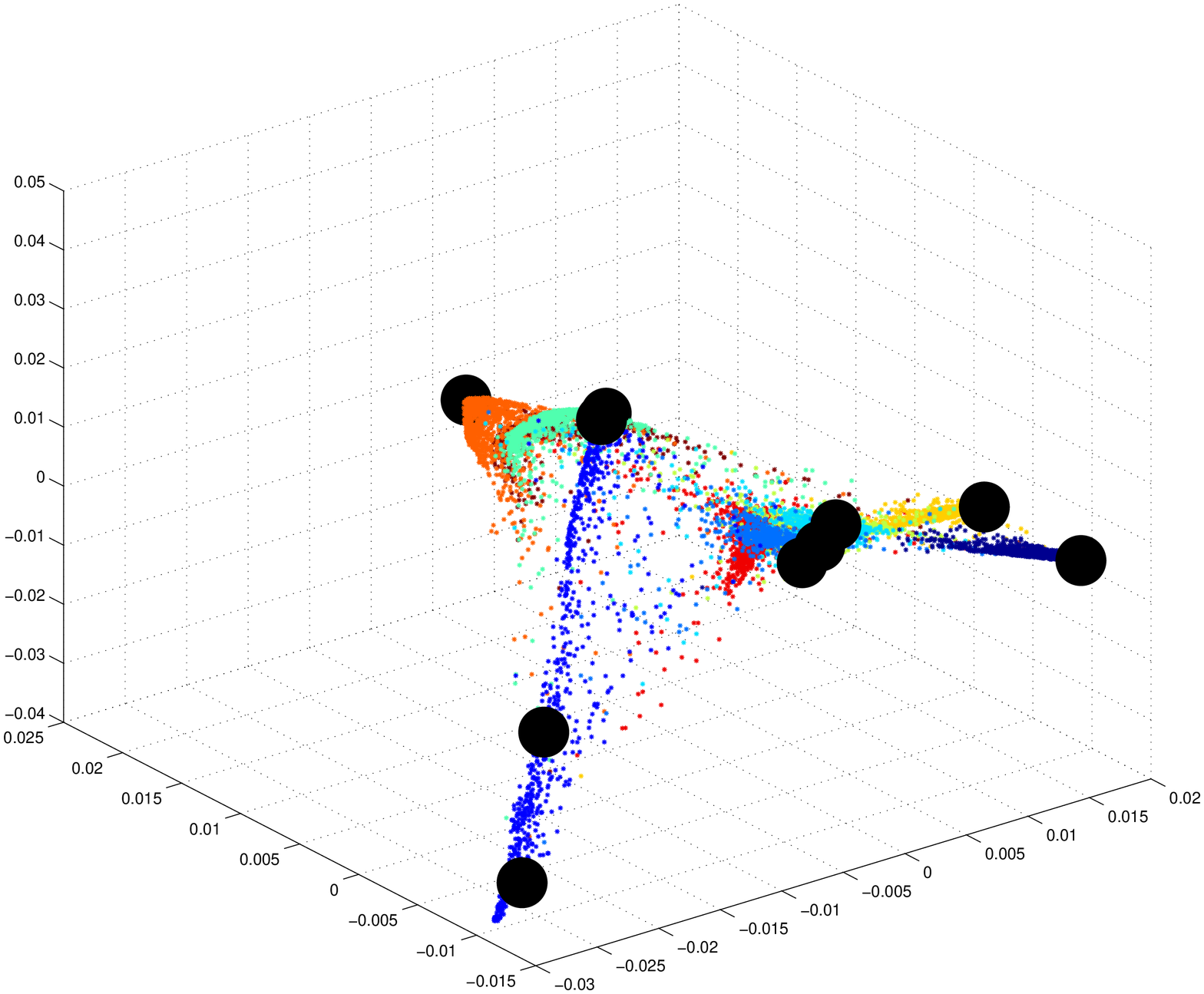} 
      & 
      \includegraphics[width=1\columnwidth]{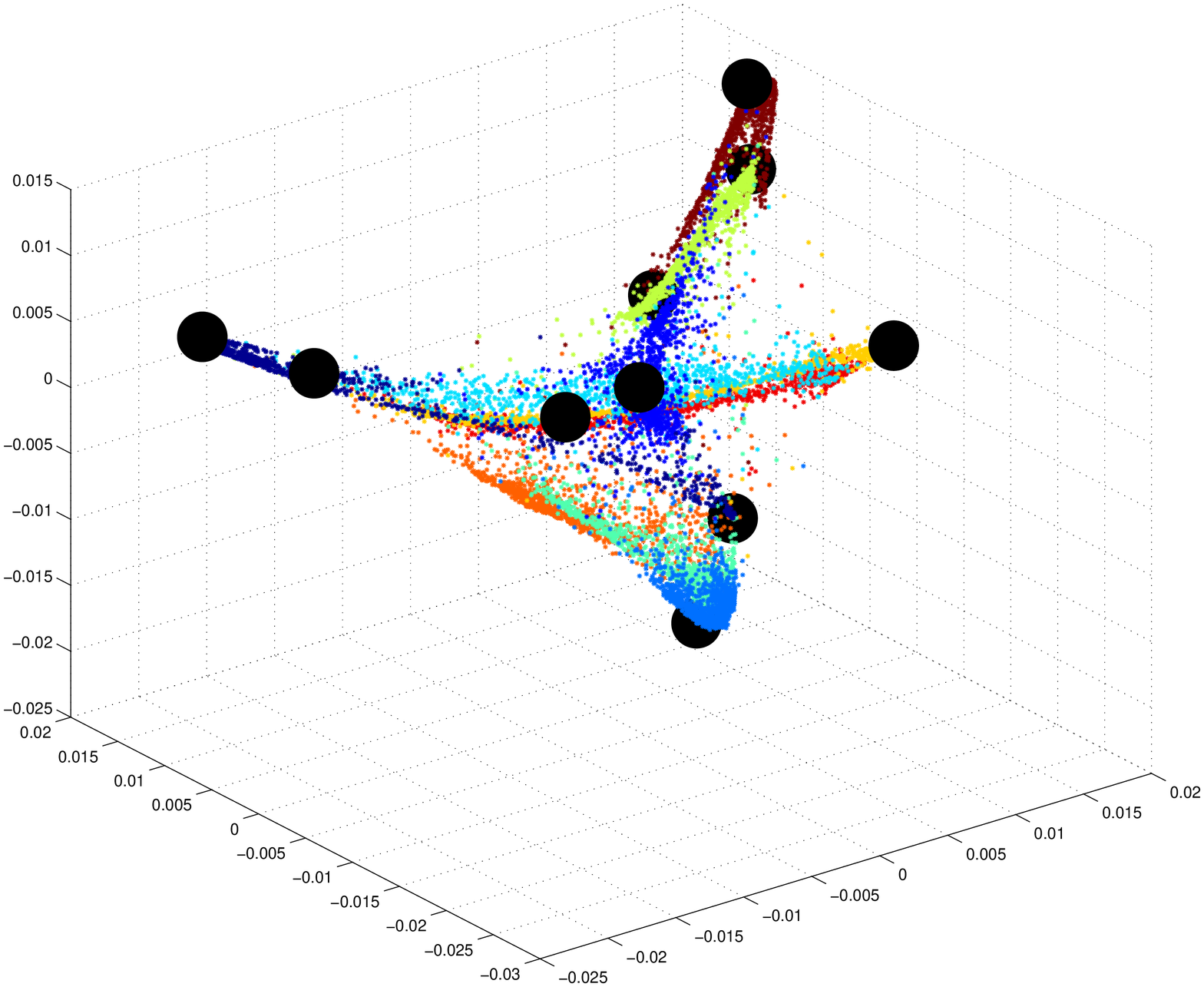}\\
      {\small (a) MNIST.} & {\small (b) USPS.}
    \end{tabular}
  \end{center}
  \caption{\label{fig:scatter}  Graph representations of the MNIST and USPS datasets. For both datasets, the nodes (digit images) with the same digit characters are shown in the same color and the big black dots indicate 10 sampled nodes by using the optimal sampling operators in Algorithm~\ref{alg:optimalset}.}
\end{figure*}

\subsubsection{Classification for Handwritten Digits}
We aim to use the proposed sampling theory to classify handwritten digits and achieve high classification accuracy with fewer samples.

We work with two handwritten digit datasets, the MNIST ~\cite{LeCun:01} and the USPS~\cite{Hull:94}. Each dataset includes ten classes (0-9 digit characters). The MNIST dataset includes 60,000 samples in total. We randomly select 1000 samples for each digit character, for a total of $N =10,000$ digit images; each image is normalized to $28 ￼\times 28 = 784$ pixels.  The USPS dataset includes 11,000 samples in total. We use all the images in the dataset;  each image is normalized to $16 ￼\times 16 = 256$ pixels. 

Since same digits produce similar images, it is intuitive to build a graph to reflect the relational dependencies among images. For each dataset, we construct a 12-nearest neighbor graph to represent  the digit images. The nodes represent digit images and each node is connected to 12 other nodes that represent the most similar digit images; the similarity is measured by the Euclidean distance. The
graph shift is constructed as $\Adj_{i,j} =
\Pj_{i,j}/\sum_i{\Pj_{i,j}}$, with
\begin{displaymath}
  \Pj_{i,j} \ = \ \exp \left( \frac{-N^2\left\|\f_i - \f_j \right\|_2} {\sum_{i,j} \left\|\f_i - \f_j \right\|_2} \right),
\end{displaymath}
with $\f_i$ a vector representing the digit image. The graph shift is asymmetric, representing a directed graph, which cannot be handled by graph Laplacian-based methods.

Similarly to Section~\ref{sec:blogs},  we aim to label all the digit images by actively querying the labels of a few images. To handle 10-class classification, we
form a ground-truth matrix $\X$ of size $N \times 10$. The element $\X_{i,j}$ is $+1$, indicating the membership of the $i$th image in the $j$th digit class, and is $-1$ otherwise. We obtain the optimal sampling operator $\Psi$ as shown in Algorithm~\ref{alg:optimalset}. The querying samples are then $\X_\M = \Psi \X \in \R^{M \times 10} $. We recover the low frequency content as
\begin{equation}
\label{eq:classification2}
\widehat{\X}_{(K)}^{opt} \ = \ \arg\min_{\widehat{\X}_{(K)} \in \R^{K \times 10}} ~\left\| {\rm sgn}(\Psi \Vm_{(K)} \widehat{\X}_{(K)}) - \X_\M \right\|_2^2.
\end{equation}
We solve~\eqref{eq:classification2} approximately by using logistic regression and then obtain the estimated label matrix $\X^{opt} = \Vm_{(K)}\widehat{\X}_{(K)}^{opt}  \in \R^{N \times 10}$, whose element $(\X^{opt})_{i,j}$ shows a confidence of labeling the $i$th image as the $j$th digit. We finally label each digit image by choosing the one with largest value in each row of $\X^{opt}$.  

The graph representations of the MNIST and USPS datasets, and the optimal sampling sets are shown in Figure~\ref{fig:scatter}. The coordinates of nodes come from the corresponding rows of the first three columns of the inverse graph Fourier transform. We see that the images with the same digit characters form clusters, and the optimal sampling operator chooses representative samples from different clusters.
\begin{figure}[!htbp]
  \begin{center}
    \begin{tabular}{cc}  
          \includegraphics[width=0.48\columnwidth]{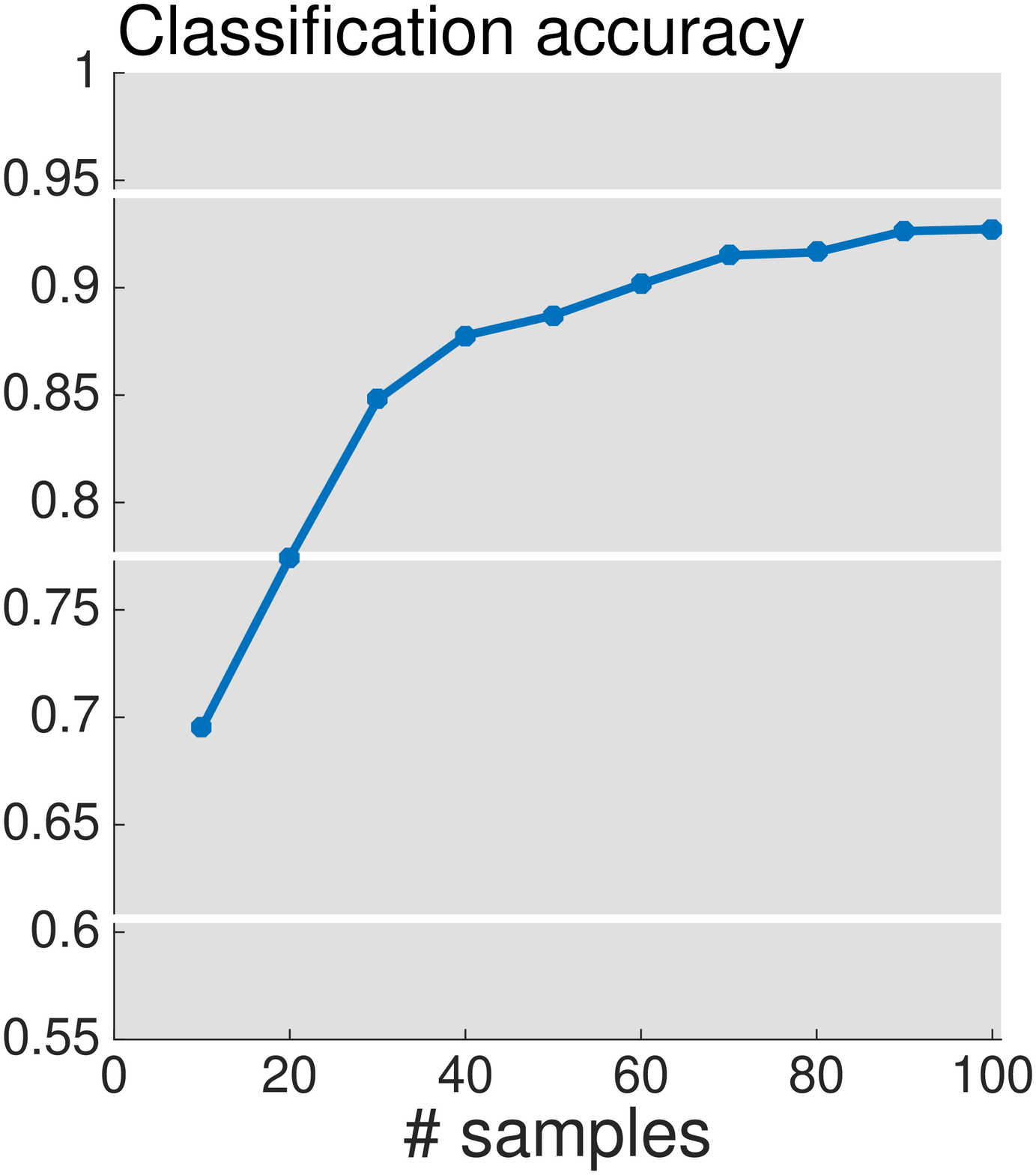} 
      & 
      \includegraphics[width=0.48\columnwidth]{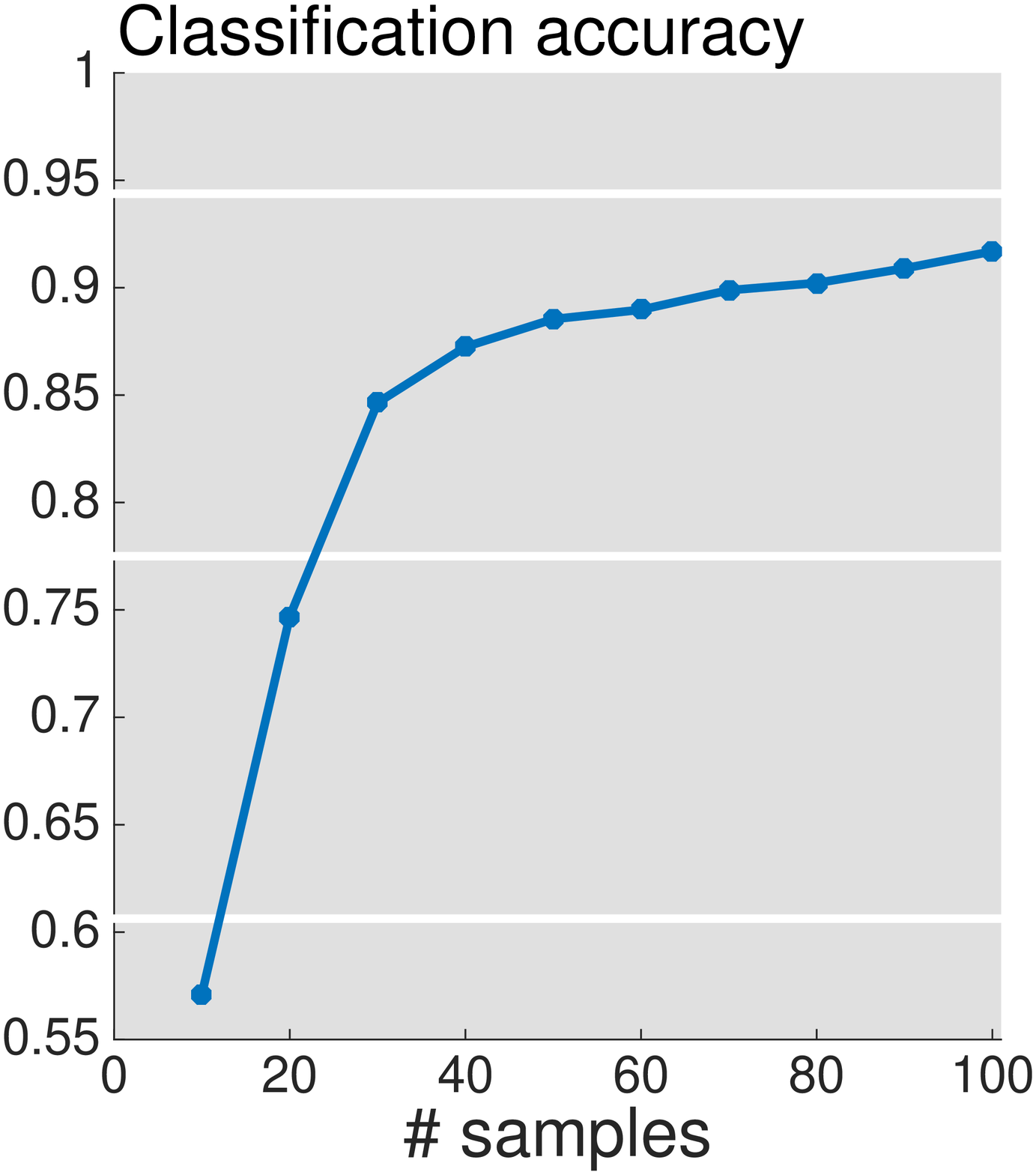}\\
      {\small (a) MNIST.} & {\small (b) USPS.}
    \end{tabular}
  \end{center}
  \caption{\label{fig:digit_classification}  Classification accuracy of the MNIST and USPS datasets as a function of the number of querying samples.}
\end{figure}

Figure~\ref{fig:digit_classification} shows the classification accuracy by varying the sample size with an interval of 10 from 10 to 100 for both datasets. For the MNIST dataset, we query $0.1\%-1\%$ images; for the USPS dataset, we query $0.09\%-0.9\%$ images. We achieve around $90\%$ classification accuracy by querying only $0.5\%$ images for both datasets. Compared with the previous results~\cite{GaddeAO:14}, in the USPS dataset, given 100 samples,
\begin{itemize}
\item local linear reconstruction is around $65\%$;
\item  normalized cut based active learning is around $70\%$;
\item graph sampling based active semi-supervised learning is around $85\%$; and
\item the proposed optimal sampling operator~\eqref{eq:optimalset} with the interpolation operator~\eqref{eq:classification2} achieves $91.69\%$.
\end{itemize}

\section{Conclusions}
\label{sec:conclusions}
We proposed a novel sampling framework for graph signals that follows the same paradigm as classical sampling theory and strongly connects to linear algebra. We showed that perfect recovery is possible when graph signals are bandlimited. The sampled signal coefficients then form a new graph signal, whose corresponding graph structure is constructed from the original graph structure, preserving the first-order difference of the original graph signal. We studied a qualified sampling operator for both random sampling and experimentally designed sampling.  We further established the connection to the sampling theory for finite discrete-time signal processing and previous works on the sampling theory on graphs, and showed how to handle full-band graphs signals by using graph filter banks. We showed applications to semi-supervised classification of online blogs and digit images, where the proposed sampling and interpolation operators perform competitively.
\bibliographystyle{IEEEbib}
\bibliography{bibl_jelena}

\end{document}